\newcommand\vldbdoi{XX.XX/XXX.XX}
\newcommand\vldbpages{XXX-XXX}
\newcommand\vldbvolume{14}
\newcommand\vldbissue{1}
\newcommand\vldbyear{2020}
\newcommand\vldbauthors{\authors}
\newcommand\vldbtitle{\shorttitle} 
\newcommand\vldbpagestyle{plain} 
\begin{document}
\title{LIDER: An Efficient High-dimensional Learned Index for Large-scale Dense Passage Retrieval}

\author{Yifan Wang}
\affiliation{%
  \institution{University of Florida}
}
\email{wangyifan@ufl.edu}

\author{Haodi Ma}
\affiliation{%
  \institution{University of Florida}
}
\email{ma.haodi@ufl.edu}

\author{Daisy Zhe Wang}
\affiliation{%
  \institution{University of Florida}
}
\email{daisyw@ufl.edu}

\begin{abstract}
Passage retrieval has been studied for decades, and many recent approaches of passage retrieval are using dense embeddings generated from deep neural models, called ``dense passage retrieval''. The state-of-the-art end-to-end dense passage retrieval systems normally deploy a deep neural model followed by an approximate nearest neighbor (ANN) search module. The model generates embeddings of the corpus and queries, which are then indexed and searched by the high-performance ANN module. 
With the increasing data scale, the ANN module unavoidably becomes the bottleneck on efficiency.
An alternative is the learned index, which achieves significantly high search efficiency by learning the data distribution and predicting the target data location.   
But most of the existing learned indexes are designed for low dimensional data, which are not suitable for dense passage retrieval with high-dimensional dense embeddings.

In this paper, we propose \textbf{LIDER}, an efficient high-dimensional \textbf{L}earned \textbf{I}ndex for large-scale \textbf{DE}nse passage \textbf{R}etrieval. LIDER has a clustering-based hierarchical architecture formed by two layers of core models. As the basic unit of LIDER to index and search data, a \textit{core model} includes an adapted recursive model index (RMI) and a dimension reduction component which consists of an extended SortingKeys-LSH (SK-LSH) and a key re-scaling module. The dimension reduction component reduces the high-dimensional dense embeddings into one-dimensional keys and sorts them in a specific order, which are then used by the RMI to make fast prediction.
Experiments show that LIDER has a higher search speed with high retrieval quality comparing to the state-of-the-art ANN indexes on passage retrieval tasks, e.g., on large-scale data it achieves 1.2x search speed and significantly higher retrieval quality than the fastest baseline in our evaluation. Furthermore, LIDER has a better capability of speed-quality trade-off.

\end{abstract}

\maketitle

\pagestyle{\vldbpagestyle}
\begingroup\small\noindent\raggedright\textbf{PVLDB Reference Format:}\\
\vldbauthors. \vldbtitle. PVLDB, \vldbvolume(\vldbissue): \vldbpages, \vldbyear.\\
\href{https://doi.org/\vldbdoi}{doi:\vldbdoi}
\endgroup
\begingroup
\renewcommand\thefootnote{}\footnote{\noindent
This work is licensed under the Creative Commons BY-NC-ND 4.0 International License. Visit \url{https://creativecommons.org/licenses/by-nc-nd/4.0/} to view a copy of this license. For any use beyond those covered by this license, obtain permission by emailing \href{mailto:info@vldb.org}{info@vldb.org}. Copyright is held by the owner/author(s). Publication rights licensed to the VLDB Endowment. \\
\raggedright Proceedings of the VLDB Endowment, Vol. \vldbvolume, No. \vldbissue\ %
ISSN 2150-8097. \\
\href{https://doi.org/\vldbdoi}{doi:\vldbdoi} \\
}\addtocounter{footnote}{-1}\endgroup

\section{Introduction}
\label{sec:introduction}
As one of the most important types of information retrieval, passage retrieval finds and returns relevant passages to a given query. The typical applications of passage retrieval include question answering (QA) systems~\cite{DPR-karpukhin-etal-2020} (which retrieve relevant answers to the questions), dialogue response selection~\cite{dense-retrieval-app-dialog-response-sel} (which selects proper responses given a dialogue context and multiple response candidates) and search engines.
A passage retrieval pipeline normally consists of two stages, the first-stage retrieval and second-stage reranking, where the former retrieves a collection of candidate passages and the latter reranks them by relevance to the query. In this paper we focus on the first-stage retrieval and simply denote it by \textit{retrieval}. A typical implementation of the retrieval uses bag-of-word vectors to represent the query and passages, where each element in a vector stands for a term in the vocabulary and length of the vector equals the vocabulary size. So the representation is usually very sparse and such an implementation is called \textit{sparse retrieval}. In recent years \textit{dense retrieval} has emerged and shown its great potential in effective passage retrieval. This new implementation of passage retrieval uses dense neural embeddings to represent the text, by which significant improvement has been made, due to the capability of neural embeddings on capturing semantic information. One of the biggest problems in sparse retrieval is the \textit{vocabulary mismatch}. When the query and passage use different terms (like synonyms) to express similar meanings, term-based sparse retrieval cannot match them. But in such a case the neural embeddings of the query and passage are likely to be still close to each other. Therefore more and more passage retrieval systems are deploying dense retrieval today.
Specifically, the state-of-the-art dense retrieval pipelines commonly consist of a deep neural model (which is normally a two-tower deep model, e.g., two-tower BERT) and an approximate nearest neighbor (ANN) index~\cite{DPR-karpukhin-etal-2020, distilling-dense-embed-lin2020, colbert}. The embeddings of the corpus are generated by the model and indexed by the ANN module offline, then online the embeddings of queries are computed and their top-k nearest neighbors are retrieved by the efficient ANN index from the corpus embeddings as results, by which the pipelines can support low-latency online serving.

But no matter how high-performance the ANN index is on specific sizes of datasets, with the continuous data explosion today, it still becomes the bottleneck of retrieval speed. Given a query, the retrieval time mainly consists of two parts, the query embedding generation time and ANN search time, where the former takes $0.0007 \sim 0.0008$ seconds while the latter takes 0.04 $\sim$ more than 0.2 seconds in our evaluation, i.e., the ANN search usually costs more than 50x time of the query embedding generation.

A potential high-performance alternative to the typical ANN indexes~\cite{ivfadc, pq-inverted-multi-index-2012, Graph-based-index-NN-Descent-2011, Graph-based-index-GNNS-hajebi2011} is the learned index.
Unlike traditional index techniques, instead of ``looking up'' the location of a key, the learned index ``predicts'' the location after learning the key-location distribution of the dataset, which makes its search process highly efficient. As the first work to propose the concept ``learned index structure'', \cite{case-for-learned-index-structures-kraska2018case} introduces a learned index structure called \textit{Recursive-Model Index} (RMI) that outperforms traditional B-tree index on efficiency and memory usage for range queries on one-dimensional data. Specifically, RMI is a hierarchical structure including multiple layers of machine learning models. Given data and their corresponding indexing keys, RMI assumes the data records are sorted as a dense array by the keys and learns the location distribution of the keys in the array. Then for any given key, RMI can predict its location in the array with a bounded error. In the case that prediction error is beyond the required bound, a hybrid index mixing RMI and B-tree will be built to reduce the overall error.

Since RMI requires the data records to be stored in a one-dimensional sorted array, it cannot be directly applied to multi-dimensional data as such data has no natural order for sorting. Therefore, following studies normally utilize dimension reduction methods to convert the multi-dimensional data into sortable one-dimensional points for RMI to fit, based on which 
several multi-dimensional learned indexes are proposed, e.g., ZM index~\cite{zm-index}, ML-Index~\cite{ML-index-davitkova2020ml}, Flood~\cite{flood-index}, etc.  But most of them are still designed for very low-dimensional data, like 2D and 3D spatial data or multi-dimensional data with only tens of dimensions. On dense neural embeddings with hundreds to thousands of dimensions, their dimension reduction methods are no longer effective due to the curse of dimensionality, making them not suitable for dense retrieval tasks.

An effective dimension reduction method for high-dimensional data is locality-sensitive hashing (LSH). It is able to convert high-dimensional data into one-dimensional hash code string, which is called ``hashkey'' in this paper.
But there is still a gap between LSH and learned index: RMI not only requires one-dimensional data, but also requires the data can be sorted meaningfully. Unfortunately, there is no explicit order for the hashkeys generated by LSH. This problem can be solved by SortingKeys-LSH (SK-LSH)~\cite{sklsh-liu2014sk}, which defines a specialized order on the LSH hashkeys such that the data points can be sorted by their hashkeys to form a continuous array where the positions meaningfully reflect the similarities between the data points.

However, to make the index system practical, we must solve one more problem which we call ``the curse of space size'': when search space (i.e., dataset size) becomes larger, both of SK-LSH and RMI perform worse. Specifically, SK-LSH needs a larger hashkey length and more sorted arrays to guarantee the effectiveness, leading to a significantly larger memory usage, and the capability of RMI to accurately fit a dataset normally degrades with the increasing dataset size. Our solution to this problem is the clustering. By clustering the whole dataset into smaller groups and building index inside each group, the search space of each index is shrunk effectively. Furthermore, this solution can benefit from parallelization as the clusters are independent from each other.

In this paper we propose \textbf{LIDER}, a novel high-dimensional \textbf{L}earned \textbf{I}ndex for efficient large-scale \textbf{DE}nse passage \textbf{R}etrieval. 
LIDER has a clustering-based two-layer architecture that consists of multiple core models. As the basic unit in LIDER, a \textit{core model} is a module combining modified RMI and adapted SK-LSH together. Specifically, a core model mainly consists of two components, a couple of simplified RMIs and an extended SK-LSH module (called ESK-LSH). The ESK-LSH module works on reducing data dimension, i.e., converting the high-dimensional dense embeddings to one-dimensional hashkeys and sorting them to form several sorted arrays by a specialized order, while the RMIs learn the hashkey-location distributions in those sorted arrays. During ANN search, in a core model, ESK-LSH converts the query embedding into hashkeys and feeds them to RMIs to predict the locations, starting from where a range search on the sorted arrays will finally retrieve the approximate nearest neighbors. 
To have the two major components better interact with each other and better fit the demands of dense retrieval, we make several critical adaptions and improvements.
The technical details are included in Section~\ref{sec:LIDER}, \ref{sec:esk-lsh} and \ref{sec:rmi}.

The two-layer structure of LIDER is built as follows: first the target dataset is clustered, then one core model is created over the cluster centroids, called ``centroids retriever'' and forming the first layer. And one core model per cluster is created to index the data inside that cluster, called ``in-cluster retriever''. All these in-cluster retrievers fill the second layer. During the retrieval process, the centroids retriever is first called to find the target clusters (i.e., the clusters possibly including correct answers to the query) and then the in-cluster retrievers in those targets will retrieve the final results. In such a way, each single core model works on a smaller subspace, which effectively tackles ``the curse of space size''. And the retrieval in this architecture is simple to be parallelized between clusters since the in-cluster retrievers work independently from each other.

We conduct evaluation based on the passage retrieval scenario, which shows that LIDER is highly effective with higher efficiency than the state-of-the-art ANN indexes on large-scale dense retrieval. Comparing to the fastest baseline method, LIDER achieves significantly higher retrieval quality with 1.2x search speed on the largest evaluation dataset.
To our best knowledge, LIDER is one of the first learned index structures for ANN queries on high-dimensional data. And LIDER is the first implementation of such a learned index to solve dense retrieval tasks.

The main contributions of this paper are shown below:
\begin{enumerate}[leftmargin=2em]
\item We build LIDER, an efficient and effective high-dimensional learned index for ANN search in dense retrieval. This is one of the first learned indexes for ANN search on very high-dimensional data, and one of the first dense retrieval solutions utilizing learned index.
\item We design a clustering-based two-layer hierarchical architecture to address the performance issues raised by large search space and optimize the retrieval efficiency by parallelization between clusters. 
\item We extend SK-LSH to more distance metrics, improve its parallelism and hashkey sorting method, design an effective re-scaling method on the hashkeys to better train RMI, and simplify RMI for better predicting efficiency and effectiveness.
\item We conduct experiments based on commonly used passage retrieval benchmarks to evaluate the performance of LIDER, 
which shows LIDER outperforms all baseline methods on both efficiency and effectiveness, especially for large-scale data.
\end{enumerate}

This paper is organized as follows: Section~\ref{sec:related-work} introduces works related to LIDER. Section~\ref{sec:LIDER} introduces the architecture and workflow of LIDER. The following Sections \ref{sec:esk-lsh} and \ref{sec:rmi} present technical details for the major components. Section \ref{sec:time-complexity} provides the time complexity analysis. Finally Section~\ref{sec:exp} presents experiments to evaluate the performance of LIDER, the effect of some important factors and parameters, and the memory footprint and index construction cost.   

\section{Related work}
\label{sec:related-work}
There are two major categories of retrieval methods, sparse and dense retrieval. Sparse retrieval normally uses bag-of-word (BOW) models, where the document representations are sparse vectors, while dense retrieval mostly utilizes neural embeddings from deep neural models which are dense vectors. Typical sparse retrieval methods include BM25, DeepCT~\cite{deepct}, Doc2query~\cite{doc2query} and docTTTTTquery~\cite{docTTTTTquery}, which are commonly used in recent retrieval studies as strong baselines. Due to the power of dense neural embeddings in semantic search, many state-of-the-art retrieval researches focus on dense retrieval. After BERT~\cite{bert} was proposed, most of the recent dense retrieval models are designed based on it and achieve significant improvement on retrieval quality, e.g., Sentence-BERT~\cite{sentence-bert} and MarkedBERT~\cite{markedbert}. But as a heavy model, BERT has high inference latency, which limits its application on online retrieval that requires low-latency serving. To solve this problem, following works have proposed several variants to reduce its complexity, including DistilBERT~\cite{distilbert}, ColBERT~\cite{colbert}, TCT-ColBERT~\cite{tct-colbert}, etc.   

To better support the low-latency online retrieval, in addition to deploying more lightweight neural models, most state-of-the-art end-to-end dense retrieval systems also arrange a high-performance ANN search module following the neural model to fast look up the closest documents to the queries based on their embeddings. ANN indexes include four major categories, i.e., hashing, graph, quantization and tree based indexes. Among them the tree based indexes are more suitable to low-dimensional space, so dense retrieval systems normally choose from the other three types of indexes. For example, DPR~\cite{DPR-karpukhin-etal-2020} utilizes a graph based index, HNSW~\cite{Graph-based-index-hnsw-malkov2016}, ColBERT deploys IVFADC index which is based on product quantization,  BPR~\cite{bpr} integrates learning-to-hash technique into DPR~\cite{DPR-karpukhin-etal-2020}, etc. FAISS~\cite{faiss} is one of the most popular ANN index libraries in today's dense retrieval, as it implements high-performance indexes of all the three classes. It has been used by many recent dense retrieval studies~\cite{dense-retrieval-with-faiss-tang2021improving, dense-retrieval-with-faiss-wu2019scalable, dense-retrieval-with-faiss-zhan2021learning, dense-retrieval-with-faiss-ANN, dense-retrieval-with-faiss-ANN-2}. There are also many other representative ANN indexes \cite{ivfadc, pq-inverted-multi-index-2012, Graph-based-index-NN-Descent-2011, Graph-based-index-GNNS-hajebi2011}.

Learned index is first proposed by \cite{case-for-learned-index-structures-kraska2018case}, where the structure is called \textit{Recursive-model index} (RMI). RMI is designed to replace the traditional range search indexes on one-dimensional data. To handle multi-dimensional data, following works propose many multi-dimensional learned index structures based on RMI. Their main differences are on the dimension reduction methods used to reduce the multi-dimensional data into one-dimension.  SageDB~\cite{sagedb-kraska2019} uses a dimension reduction method with some similarities to LSH (which is not explicitly discussed in the paper). ZM-index~\cite{zm-index}, ML-index~\cite{ML-index-davitkova2020ml} and Flood~\cite{flood-index} reduce data dimension by Z-order, iDistance, and a multi-dimensional grid based method respectively. \cite{case-for-learned-spatial-indexes-pandey2020} applies the idea of Flood to several traditional multi-dimensional indexes to make them better handle spatial queries. \cite{Learned-Index-for-Exact-Similarity-Search-in-Metric-Spaces} proposes LIMS, a learned index for efficient similarity search in metric spaces. LIMS applies a clustering-based strategy to split the dataset into more uniformly distributed subsets to improve the search performance. Though LIDER also has a clustering-based architecture, it is quite different from LIMS: (1) LIMS is an exact similarity search index while LIDER is for approximate similarity search, (2) LIMS is a disk-based index while LIDER is in-memory index, and (3) the data dimension of LIMS is not very high, i.e., less than 100 in their experiments, which is significantly lower than that of LIDER. 
To our best knowledge, all these existing learned indexes mentioned above are designed for relatively low-dimensional data and there is no learned index for very high-dimensional data like neural embeddings. 

\section{LIDER}
\label{sec:LIDER}
In this section we introduce the overall architecture and workflow of LIDER and each core model in it.
Then in the following sections we discuss important technical details about the major components in LIDER. 

\begin{figure}[!h]
  \centering
  \includegraphics[width=\columnwidth]{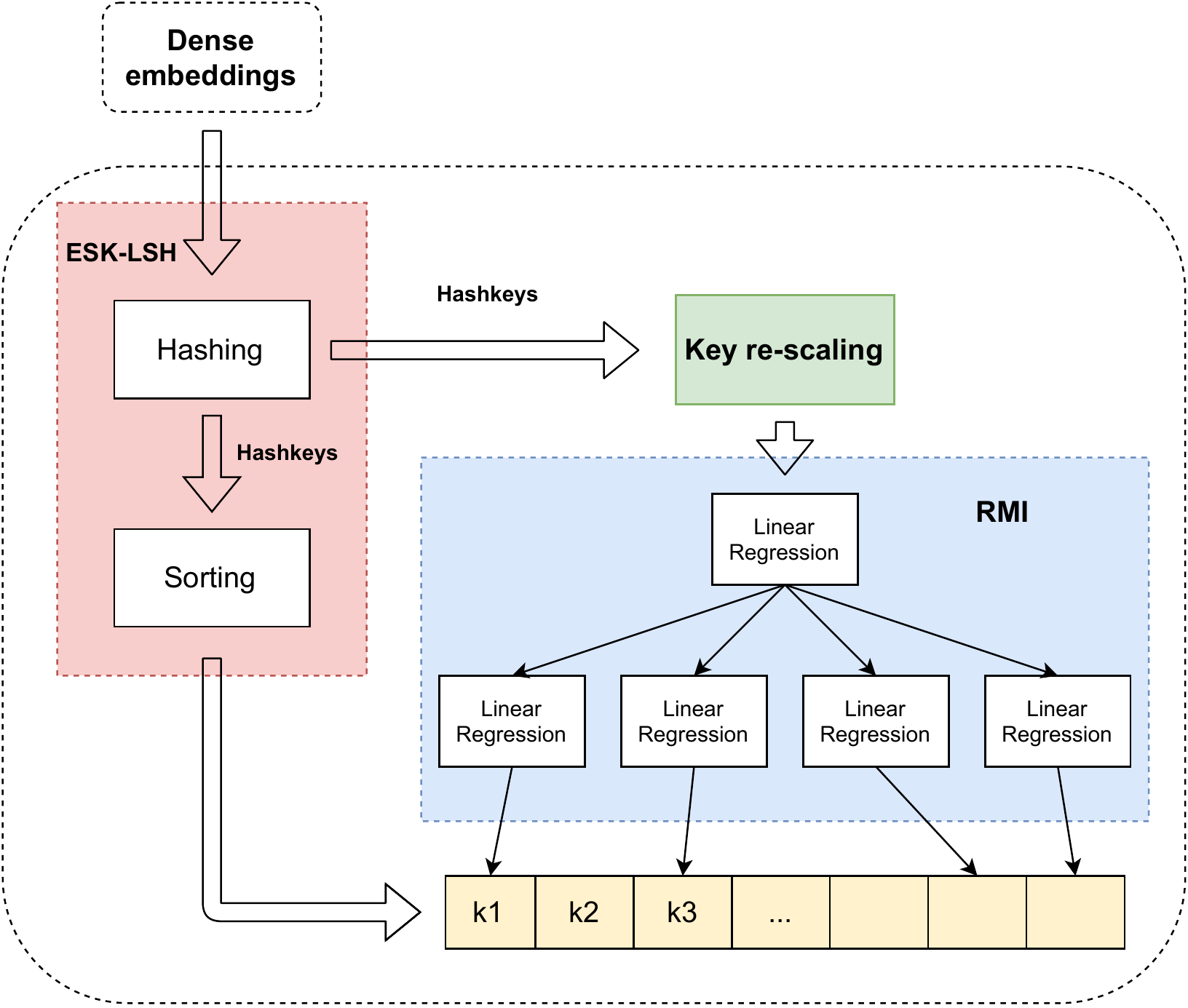}
  \caption{Core model structure}
  \setlength{\belowcaptionskip}{-50pt}
  \label{fig:arch-core-model}
\end{figure}

\begin{figure}[!h]
  \centering
  \includegraphics[width=\columnwidth]{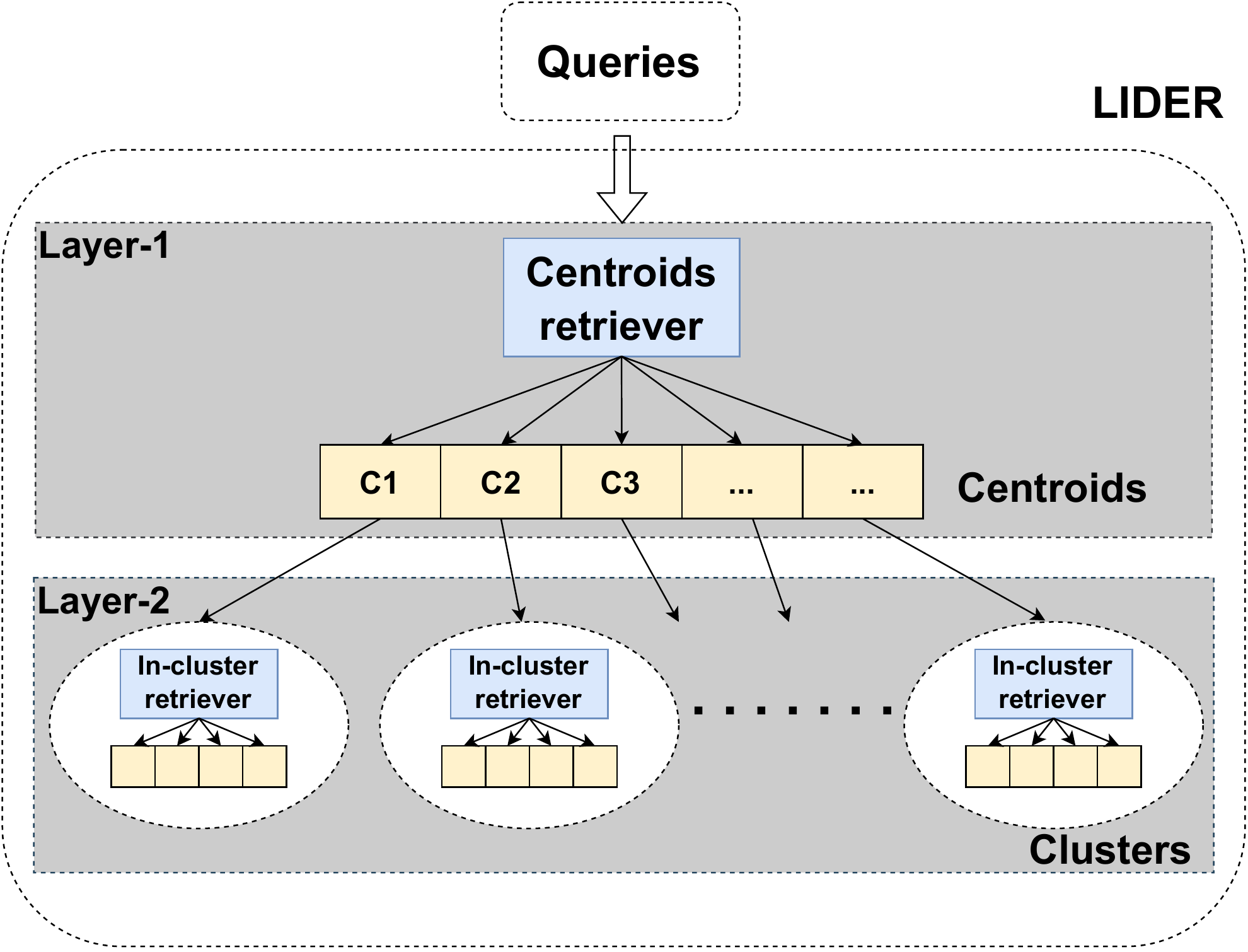}
  \caption{LIDER architecture (the \textit{centroids retriever} and each \textit{in-cluster retriever} are both a core model as shown in Figure~\ref{fig:arch-core-model})}
  \setlength{\belowcaptionskip}{-50pt}
  \label{fig:two-layer-arch}
\end{figure}

\subsection{Core model structure}
The structure of core model is illustrated in Figure~\ref{fig:arch-core-model}. There are two major modules in the core model, the dimension reduction module (illustrated as the pink and green boxes) and the location prediction module (i.e., the RMI, marked as the blue box). The dimension reduction module includes two components, the extended SK-LSH (ESK-LSH), which is extended from the original SK-LSH, and the Key re-scaling component, which is used to convert the string hashkeys into numeric keys within a proper range to train the RMIs. The yellow boxes stand for one of the sorted arrays generated by ESK-LSH. Essentially, one RMI corresponds to one sorted array. And ESK-LSH may maintain multiple sorted arrays, in which case there will be multiple RMIs existing in a core model. More details are presented in Section~\ref{sec:esk-lsh} and \ref{sec:rmi}. 

\subsection{Clustering-based architecture}
As shown in Figure~\ref{fig:two-layer-arch}, LIDER has a clustering-based two-layer architecture. The first layer (the higher grey box, marked as ``Layer-1'') includes one core model, named \textit{centroids retriever}, to index all the cluster centroids. And the second layer (the lower grey box, marked as ``Layer-2'') consists of multiple core models, one inside each data cluster, named \textit{in-cluster retriever}. Each in-cluster retriever indexes the data within the corresponding cluster.

Comparing to building only one core model to index the whole dataset, such a clustering-based layered architecture allows each core model to fit a significantly smaller search space, e.g., roughly less than 1/1000 of the whole dataset since we normally set the number of clusters to be more than 1000. A smaller dataset is better for RMI to fit and predict more accurately, and allows ESK-LSH to reduce the memory usage and search latency by shrinking the hashkey length and number of hashkey arrays while guaranteeing the effectiveness.  
In our implementation, the clusters are generated simply by k-means clustering. So unlike the components in core model, this paper does not have a specific section to show more details about the clustering strategy. All important details for it have been included in this section and Section~\ref{sec:workflow}.

\subsection{Workflow}
\label{sec:workflow}
In this section, we introduce the workflow of building LIDER and query processing in LIDER, from perspectives of single core model and the whole system respectively. 
\subsubsection{Workflow in each core model} 
For a single core model, to make it simple, here we introduce the indexing and querying workflow in the case that only one sorted array and one RMI are maintained, as it is in Figure~\ref{fig:arch-core-model}. When a core model is indexing the corpus of documents, document embeddings generated by the upstream deep neural model are first input to ESK-LSH to generate their hashkeys. Then the hashkeys are sorted to form the sorted hashkey array, and also passed to the key re-scaling component to be converted into numeric keys. For each of these numeric keys, its location is that of the original hashkey in the sorted array. These key-location pairs are then used to train the RMI, where the key is data and location is label. RMI will learn to predict the location of a given key by the training. When querying $k$ nearest neighbors with the core model, the query embedding is first input to ESK-LSH to generate the query hashkey, then query hashkey is converted into numeric query key, which will be passed to RMI to predict its location. After the prediction, ESK-LSH will do a bi-directional range search (with a pre-determined range width purely depending on $k$) starting from the predicted location on the sorted array to retrieve a pre-determined number of candidate hashkeys (where the number is normally several times of $k$). Finally the original document embeddings corresponding to the candidate hashkeys will be scored (e.g., computing cosine similarity to the query embedding) and the top-$k$ of them will be returned. In the case that multiple arrays and RMIs are used, the core model will simply do a parallel execution of such a process on each array and RMI then merge the results.   
\subsubsection{Overall workflow in LIDER}
LIDER is built as follows: first the whole dataset is clustered into several subsets by k-means clustering algorithm, then a core model is created on the collection of the cluster centroids, and within each data cluster, one core model is also created to index that subset. The core model indexing the centroids is the centroids retriever while each core model inside a cluster is an in-cluster retriever. When using LIDER for ANN queries, the query embedding is first passed into the centroids retriever to find the approximate nearest centroids to it, then the query will be searched by the in-cluster retrievers in the clusters corresponding to those centroids. This in-cluster retrieval process is parallelized between clusters, since the in-cluster retrievers work independently from each other, therefore LIDER makes them execute the retrieval simultaneously to improve the efficiency. Finally the results from in-cluster retrievers will be merged and the top-$k$ of them by some score (like cosine similarity to the query embedding) will be returned.

\section{Extended SK-LSH}
\label{sec:esk-lsh}
Locality-sensitive hashing (LSH) is an effective hashing framework for approximate similarity search on high-dimensional data. It generates a signature (called ``hashkey'') for each data vector by hashing it with several hashing functions $h_1(\cdot),...,h_m(\cdot)$ and concatenating the resulting hashing values. Then the similarity between two data vectors can be approximated using their hashkeys. 
The specialized hashing function $h_i(\cdot)$ is so called \textit{LSH function} while the sequence of them ${G(\cdot) = (h_1(\cdot),...,h_m(\cdot))}$ is normally called \textit{compound LSH function}. LSH function is specially designed to achieve both of randomization and locality-preserving. Its formal definition is:

\begin{definition}[LSH function]
Given a metric space and its distance metric $d$, a threshold $t > 0$, and any two data points $\vec{u}$ and $\vec{v}$ in the metric space, an LSH function $h(\cdot)$ should satisfy the following two conditions:

(1) if $d(\vec{u}, \vec{v}) \leq t$, then $h(\vec{u}) = h(\vec{v})$ with a probability at least $p_1$

(2) if $d(\vec{u}, \vec{v}) \geq ct$, then $h(\vec{u}) = h(\vec{v})$ with probability at most $p_2$

where $c > 1$ is an approximation factor, and the probabilities $p_1 > p_2$.

\end{definition}
The definition means an LSH function maps similar/close data points into the same hash bucket with a higher probability than mapping dissimilar points into the same bucket. As a result, the data points within the same bucket are very likely to be similar to each other. By using the compound LSH function, the false negative rate can be further reduced to increase the recall.

Based on the essential LSH model, 
SK-LSH~\cite{sklsh-liu2014sk} defines a linear order over the hashkeys to sort them such that the data points with smaller Euclidean distance are placed closer on the sorted hashkey array. 
To find nearest neighbors of the query vector, it is sufficient to find data vectors whose hashkeys are close to the query hashkey on the array, and it is easy for SK-LSH to expand the nearest neighbor search by scanning farther locations from the initial hashkey (i.e., the closest hashkey to that of the query) towards both of its left and right sides along the array, called ``bi-directional expansion'', which is basically a fixed length range search on the array, where the range length is pre-determined by $k$ (normally several times of $k$). To let SK-LSH better fit dense retrieval scenario and have a higher performance, we make several extensions and improvement on it.

\subsection{Extension on similarity metrics}
Many of the state-of-the-art embedding based research and applications~\cite{mikolov2013efficient,lacroix2018canonical,shi2019probabilistic,bert} rely on cosine similarity to accurately measure the embedding similarities. For dense retrieval, in addition to cosine similarity~\cite{DPR-karpukhin-etal-2020, colbert, curse-for-large-index-reimers2020curse}, inner product is another commonly used similarity metric~\cite{distilling-dense-embed-lin2020, ance-xiong2020approximate, dense-retrieval-with-faiss-tang2021improving}.
However, SK-LSH is designed for Euclidean distance. Therefore, we extend SK-LSH to also work with cosine similarity and name the extended algorithm as ESK-LSH. We do not make a specific extension for inner product since it is equivalent to cosine similarity on normalized vectors.

SK-LSH requires a base LSH model to perform the hashing. To support cosine similarity, we use the hyperplane-based random projection LSH~\cite{random-projection-lsh} model. Since~\cite{sklsh-liu2014sk} only guarantees the properties of SK-LSH for Euclidean distance and the corresponding LSH model, it is necessary to show those properties still hold for cosine similarity and hyperplane-based random projection LSH model.

For any two vectors $\vec{p_1}$, $\vec{p_2} \in R^d$, we define $sim(\vec{p_1}, \vec{p_2})$ as their cosine similarity:
\begin{equation}
\label{eq:cos-sim}
sim(\vec{p_1}, \vec{p_2}) := \cos(\theta_{1,2}) = \frac{\vec{p_1} \cdot \vec{p_2}}{\left\lVert\vec{p_1}\right\rVert \cdot \left\lVert\vec{p_2}\right\rVert}
\end{equation}

We also know from~\cite{random-projection-lsh} that, the probability of generating identical hash values for two vectors by the LSH model is 
\begin{equation}
\label{eq:lsh-prob-equal}
p(\theta_{1,2}) := P[h(\vec{p_1}) = h(\vec{p_2})] = 1 - \frac{\theta_{1,2}}{\pi}
\end{equation}

Following~\cite{sklsh-liu2014sk}, we use ${G(\cdot) = (h_1(\cdot),...,h_m(\cdot))}$ to denote a compound LSH function, where $h_i: R^d \rightarrow \{0, 1\}$ are randomly selected hash functions defined in~\cite{random-projection-lsh}. $\operatorname{KL}(K_1, K_2)$ is the \textit{non-prefix length}~\cite{sklsh-liu2014sk} between $K_1 = G(\vec{p_1})$ and $K_2 = G(\vec{p_2})$. $\operatorname{dist}(K_1, K_2)$ is the distance between $K_1$ and $K_2$. Please refer to Equation~6 of ~\cite{sklsh-liu2014sk} for the complete definition of $\operatorname{dist}(K_1, K_2)$, and Equation~4 for $\operatorname{KL}(K_1, K_2)$. 

\begin{lemma}
\label{lemma:dist-vs-cos}
For any two arbitrary random vectors $\vec{p_1}, \vec{p_2} \in R^d$ with angle $\theta_{1,2}$, the probability of  ${\operatorname{dist}(G(\vec{p_1}),G(\vec{p_2}))}$ being less than ${m-l+1 \, (\forall l: 0 \le l \le m)}$ is ${[p(\theta_{1,2})]^l}$.
\end{lemma}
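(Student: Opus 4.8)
The plan is to reduce the probabilistic claim about the continuous quantity $\operatorname{dist}$ to a purely combinatorial event about the agreement of the individual hash values $h_1,\dots,h_l$, and then exploit the mutual independence of the hash functions together with Equation~\eqref{eq:lsh-prob-equal}.

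First I would unfold the definitions referenced from Equations~4 and~6 of~\cite{sklsh-liu2014sk}. Writing $\operatorname{LCP}(K_1,K_2)$ for the length of the longest common prefix of two hashkeys, the non-prefix length satisfies $\operatorname{KL}(K_1,K_2) = m - \operatorname{LCP}(K_1,K_2)$, and the distance decomposes as $\operatorname{dist}(K_1,K_2) = \operatorname{KL}(K_1,K_2) + \delta$ with a fractional correction $\delta \in [0,1)$. Because $\operatorname{KL}$ and $m-l+1$ are integers while $0 \le \delta < 1$, the inequality $\operatorname{dist}(K_1,K_2) < m-l+1$ holds exactly when $\operatorname{KL}(K_1,K_2) \le m-l$, which is in turn equivalent to $\operatorname{LCP}(K_1,K_2) \ge l$. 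This equivalence is the crux of the argument: the threshold $m-l+1$ on the real-valued distance collapses to the clean condition that $K_1=G(\vec{p_1})$ and $K_2=G(\vec{p_2})$ agree on their first $l$ coordinates.

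It then remains to compute $P[\operatorname{LCP}(G(\vec{p_1}),G(\vec{p_2})) \ge l]$. By definition this event is $\bigcap_{i=1}^{l}\{h_i(\vec{p_1}) = h_i(\vec{p_2})\}$. Since the $h_i$ are drawn independently as hyperplane random-projection hashes, these $l$ events are mutually independent, and by Equation~\eqref{eq:lsh-prob-equal} each has probability $p(\theta_{1,2}) = 1 - \theta_{1,2}/\pi$. Multiplying yields $P[\operatorname{LCP} \ge l] = [p(\theta_{1,2})]^l$, and combining with the equivalence above gives the lemma; the endpoints $l=0$ (probability $1$) and $l=m$ (full agreement) are special cases of the same formula.

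The step I expect to be the main obstacle is the first one: verifying that the fractional correction $\delta$ arising in the SK-LSH distance of Equation~6 is genuinely confined to $[0,1)$ under the random-projection construction, so that $\delta$ can never push $\operatorname{dist}$ across an integer boundary and the equivalence $\operatorname{dist} < m-l+1 \Leftrightarrow \operatorname{LCP} \ge l$ is exact rather than approximate. Once that bookkeeping is settled, the independence computation is routine.
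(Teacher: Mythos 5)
Your proposal is correct, and at the top level it follows the same skeleton as the paper's proof: both reduce the threshold event $\{\operatorname{dist}(G(\vec{p_1}),G(\vec{p_2})) < m-l+1\}$ to $\{\operatorname{KL}(G(\vec{p_1}),G(\vec{p_2})) \le m-l\}$, i.e., to the first $l$ hash values agreeing (the paper simply cites Definition~5 of \cite{sklsh-liu2014sk} for this equivalence, while you re-derive it from the integer-plus-fractional decomposition of $\operatorname{dist}$). Where you genuinely diverge is the final probability computation. The paper partitions by the exact common-prefix length $L$ and evaluates
$P[\operatorname{KL} \le m-l] = \sum_{L=l}^{m-1} p(\theta_{1,2})^L\,(1-p(\theta_{1,2})) + p(\theta_{1,2})^m$,
which telescopes to $p(\theta_{1,2})^l$; this requires a case split between $L < m$ (where the $(L+1)$-st hash must disagree) and $L = m$ (full agreement). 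You instead observe that the superset event $\{\operatorname{LCP} \ge l\}$ is exactly $\bigcap_{i=1}^{l}\{h_i(\vec{p_1}) = h_i(\vec{p_2})\}$ and multiply $l$ independent agreement probabilities from Equation~\ref{eq:lsh-prob-equal}, obtaining $[p(\theta_{1,2})]^l$ in one step. Your route is more elementary and arguably cleaner: it avoids both the summation and the boundary case, since you never need the distribution of the exact prefix length, only of its lower tail. As for the obstacle you flag, the confinement $\delta = KD/C \in [0,1)$ is guaranteed by construction in SK-LSH ($C$ is chosen strictly larger than the maximum of $KD$; in the binary random-projection setting here $KD \equiv 1$, and for the paper's extended distance $KD_e \le 2^B - 1 < C = 2^B$), so the equivalence $\operatorname{dist} < m-l+1 \Leftrightarrow \operatorname{LCP} \ge l$ is exact, and your argument goes through without further repair.
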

\begin{proof}
According to Definition 5 in ~\cite{sklsh-liu2014sk},
\begin{align*}
\operatorname{dist}(G(\vec{p_1}),G(\vec{p_2})) < m-l+1 
\Leftrightarrow
\operatorname{KL}(G(\vec{p_1}),G(\vec{p_2})) \le m-l
\end{align*}

\noindent Let $\operatorname{KL}(G(\vec{p_1}),G(\vec{p_2})) = m - L$, then we have $m-L \leq m-l$, then $l \leq L$, also trivially $L \leq m$. Thus, by definition of $KL$, $G(\vec{p_1})$ and $G(\vec{p_2})$ share common prefix whose length is $L$.  
This implies that ${h_i(\vec{p_1}) = h_i(\vec{p_2})}$ holds for ${1 \le i \le L}$. 
Due to the fact that each hash function $h_i$ is independently and randomly selected, the desired probability can be computed as follows:
\begin{align}
    \label{eq:dist-and-p_theta}
     & P[\operatorname{dist}(G(\vec{p_1}),G(\vec{p_2})) < m-l+1] = P[\operatorname{KL}(G(\vec{p_1}),G(\vec{p_2})) \leq m-l]\nonumber \\
     &  = \sum_{L=l}^{m} P[\operatorname{KL}(G(\vec{p_1}),G(\vec{p_2})) = m - L] \nonumber\\
     & = \sum_{L=l}^{m-1} \left(\prod_{i=1}^L P[h_i(\vec{p_1}) = h_i(\vec{p_2})]   (1-P[h_{L+1}(\vec{p_1}) = h_{L+1}(\vec{p_2})]) \right) \nonumber \\
     & + \prod_{i=1}^m P[h_i(\vec{p_1}) = h_i(\vec{p_2})] \nonumber \\
     & = \sum_{L=l}^{m-1} p(\theta_{1,2})^L (1-p(\theta_{1,2})) + p(\theta_{1,2})^m = \, p(\theta_{1,2})^l 
\end{align}
\end{proof}

According to Equation~\ref{eq:dist-and-p_theta}, for any given $l$, the probability of two hashkeys $G(\vec{p_1})$ and $G(\vec{p_2})$ being close monotonically increases when the cosine similarity $sim(\vec{p_1}, \vec{p_2})$ increases (i.e., $\theta_{1,2}$ decreases).

Therefore, distance of compound hashkeys is a reasonable metric to estimate cosine similarity between the original vectors. In such context, to find similar data points to the query, searching a small vicinity of the query hashkey is possibly enough. 

\subsection{Extension on hashkey distance}
Though we prove that by using a random projection LSH model as the base model, SK-LSH works for cosine similarity, there is a new problem which does not exist in the original SK-LSH: when using cosine similarity, the hashkey distance defined in \cite{sklsh-liu2014sk} is too coarse to distinguish the actual similarities of many different document embeddings to the query embedding. We name such a problem as ``low resolution problem''. In \cite{sklsh-liu2014sk}, the hashkey distance is defined as 
\begin{align}
\label{eq:hashkey-dist}
    dist(K_1, K_2) = KL(K_1, K_2) + \frac{KD(K_1, K_2)}{C}
\end{align}
where $K_1$, $K_2$ are two hashkeys, $KL(K_1, K_2)$ is the \textit{non-prefix length} \cite{sklsh-liu2014sk} between $K_1$ and $K_2$ (i.e., the length of the sub-sequence after their common prefix), and $KD(K_1, K_2)$ is the $(l + 1)$-th
\textit{element distance} between $K_1$ and $K_2$ (i.e, the absolute difference of the first non-identical elements between the two hashkeys), while $C$ is a constant factor that is set to be larger than the maximum of $KD$ such that $\frac{KD(K_1, K_2)}{C} < 1$ always holds. When the similarity metric is Euclidean distance, each element in a hashkey is an integer within a wide bounded range, which means the range of $KD$ is also wide. But when the similarity metric is cosine similarity, each hashing value must be either 0 or 1, in which case $KD(K_1, K_2) \equiv 1, \forall K_1, K_2$. Therefore, no matter how different $K_1$ and $K_2$ are after the common prefix, $dist(K_1, K_2)$ cannot reflect it.

For example, supposing we use hashkeys of length 6. Given a query embedding $\vec{v_q}$ with hashkey $K_q = 000000$, one document embedding $\vec{v_1}$ with hashkey $K_1 = 111111$ and another $\vec{v_2}$ with $K_2 = 100000$, according to Equation~\ref{eq:hashkey-dist}, we have $dist(K_q, K_1) = dist(K_q, K_2) = 6 + \frac{1}{C}$. In such case, relying on the hashkey distance, $\vec{v_1}$ and $\vec{v_2}$ are the same in terms of similarity to $\vec{v_q}$. But obviously $\vec{v_2}$ is probably more similar to $\vec{v_q}$ than $\vec{v_1}$ to $\vec{v_q}$, since the Hamming distance between $K_2$ and $K_q$ is much smaller than that between $K_1$ and $K_q$. Therefore the original hashkey distance is not able to distinguish between $\vec{v_1}$ and $\vec{v_2}$ in such situations, or in another word, it has a low resolution under the settings of cosine similarity metric and random projection LSH.

To solve this problem, an intuitive way is replacing $KD$ with Hamming distance. But unfortunately, this will cause the linear order of hashkeys to no longer hold. Specifically, the linear order defined by SK-LSH is based on element-wise comparison from the most significant element to the least significant element of the hashkeys. When using random projection LSH (where each element is either 0 or 1), the order is actually a dictionary order (a.k.a., lexicographic order). By sorting hashkeys in this order, SK-LSH guarantees that for any three ordered hashkeys in one array, $K_2$, $K_1$, $K$, $dist(K_2, K) \geq dist(K_1, K)$ always holds, which is essential to the bi-directional expansion search. However, Hamming distance does not consider the element significance, instead, it treats all the elements equally. This may cause that for some ordered $K_2$, $K_1$, $K$, $dist(K_2, K) < dist(K_1, K)$, which breaks the theoretical foundation of SK-LSH. And we do observe many occurrences of this issue in our experiments.

We tackle the low resolution problem successfully by extending the length of sub-sequence used by $KD$. In \cite{sklsh-liu2014sk}, $KD$ is computed as 
\begin{align}
    KD(K_1, K_2) = |k_{1,l+1} - k_{2,l+1}|
\end{align}
where $l$ is the length of the common prefix between hashkeys $K_1$ and $K_2$, and $k_{1,l+1}$, $k_{2,l+1}$ stand for the first non-identical elements between $K_1$ and $K_2$. We extend it to
\begin{align}
    KD_e(K_1, K_2) = |Decimal(K_{1,l+1:l+1+B}) - Decimal(K_{2,l+1:l+1+B})|
\end{align}
where $Decimal(\cdot)$ is the operation to convert a binary hashkey (which only includes 0 and 1 and can be seen as a binary number) into a decimal integer number, while $K_{1,l+1:l+1+B}$ stands for the sub-sequence of length $B$ starting from the $(l+1)$-th element in $K_1$, i.e., right after the common prefix, and the same for $K_{2,l+1:l+1+B}$. Then we have a new definition of hashkey distance
\begin{align}
    dist_e(K_1, K_2) = KL(K_1, K_2) + \frac{KD_e(K_1, K_2)}{C}
\end{align}
here $KL$ keeps original, and we set $C = 2^B$ to make sure $\frac{KD_e(K_1, K_2)}{C} < 1$ still holds. Comparing to the original $KD$ which is always 1, $KD_e$ has more possible values ranging from 0 to $2^B-1$, thus it can better distinguish between different hashkeys. Using $dist_e$, if we set $B = 3$, the example above becomes $dist_e(K_q, K_1) = 6 + \frac{7}{C}$, $dist_e(K_q, K_2) = 6 + \frac{4}{C}$, successfully reflecting that $K_2$ is more similar to $K_q$. 
In addition, $dist_e$ does not change the conclusion from Equation~\ref{eq:dist-and-p_theta}, as $KL$ is the same and $\frac{KD_e}{C}$ is still less than 1. 
Furthermore, unlike Hamming distance, the linear order still holds given $dist_e$. Formally, it can be stated as two lemmas  

\begin{lemma}
\label{lemma:linear-order-holds-on-dist-e-1}
In a hashkey array sorted by the original SK-LSH linear order, for any three ordered hashkeys $K_2$, $K_1$, $K$, $dist_e(K_2, K) \geq dist_e(K_1, K)$.
\end{lemma}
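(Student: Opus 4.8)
The plan is to exploit the two-part structure of $dist_e$, namely an integer part $KL$ plus a fractional part $KD_e/C\in[0,1)$, together with the fact that for binary hashkeys the SK-LSH linear order is exactly the lexicographic order. Since the three hashkeys are listed in the order they occur along the sorted array, I may assume without loss of generality that $K_2\preceq K_1\preceq K$ lexicographically (ties make the inequality trivial, so I take the comparisons strict). Write $l_1,l_2$ for the lengths of the longest common prefixes of $K_1,K$ and of $K_2,K$, so that $KL(K_1,K)=m-l_1$ and $KL(K_2,K)=m-l_2$.

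The key structural step is the prefix-monotonicity claim $l_2\le l_1$, equivalently $KL(K_2,K)\ge KL(K_1,K)$, which I would prove by a short case analysis on the first position $j$ where $K_2$ and $K_1$ disagree. Since $K_1\prec K$, at position $l_1+1$ we have $K_1[l_1+1]=0$ and $K[l_1+1]=1$; since $K_2\prec K_1$, at position $j$ we have $K_2[j]=0$ and $K_1[j]=1$, which forces $j\neq l_1+1$. If $j\le l_1$, then $K_2$ breaks from $K$ already at position $j$, so $l_2=j-1<l_1$; if $j>l_1+1$, then $K_2,K_1,K$ all agree through position $l_1$ while $K_2[l_1+1]=K_1[l_1+1]=0\neq 1=K[l_1+1]$, so $l_2=l_1$. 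Either way $l_2\le l_1$.

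With the claim in hand I would split into two cases. If $l_2<l_1$, then $KL(K_2,K)\ge KL(K_1,K)+1$, and because $0\le KD_e(\cdot,K)/C<1$ the integer gap cannot be closed by the fractional terms, giving $dist_e(K_2,K)>dist_e(K_1,K)$. If $l_2=l_1=:l$, the $KL$ terms cancel and the inequality reduces to $KD_e(K_2,K)\ge KD_e(K_1,K)$. Here $K[l+1]=1$ while $K_1[l+1]=K_2[l+1]=0$, so the reference window $K_{l+1:l+B}$ has value at least $2^{B-1}$ and strictly exceeds both other windows; hence $KD_e(K_i,K)=Decimal(K_{l+1:l+B})-Decimal(K_{i,l+1:l+B})$ for $i\in\{1,2\}$, and the goal becomes $Decimal(K_{2,l+1:l+B})\le Decimal(K_{1,l+1:l+B})$. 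This is the elementary fact that reading a fixed-length binary block as an integer is monotone in lexicographic order, which holds because $K_2\preceq K_1$ makes their windows compare the same way bit-by-bit from the most significant position.

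I expect the main obstacle to be the prefix-monotonicity claim $l_2\le l_1$, since this is where the global lexicographic ordering of the triple must be converted into a statement about prefixes shared with the single reference $K$, and the case analysis must rule out the spurious possibility $j=l_1+1$ using the $0$-below-$1$ pattern forced by binary digits at any first disagreement. After that, the two cases are routine: the first follows directly from the integer/fractional decomposition of $dist_e$, and the second reduces to monotonicity of the binary-to-decimal map. A minor point I would address is boundary handling when $l+B>m$, which I would resolve by zero-padding the truncated window so that the monotone-decimal argument still applies.
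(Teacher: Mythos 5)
Your proof is correct: the prefix-monotonicity claim $l_2 \le l_1$ (including the crucial exclusion of $j = l_1+1$ via the forced $0$/$1$ pattern at a first disagreement), the integer/fractional decomposition of $dist_e$ with $0 \le KD_e/C < 1$ closing the case $l_2 < l_1$, and the window-dominance observation in the case $l_2 = l_1$ (the reference window of $K$ begins with a $1$ while those of $K_1, K_2$ begin with $0$, so the absolute values in $KD_e$ resolve and lexicographic monotonicity of the binary-to-decimal map on the aligned windows finishes the argument) are all sound, with the $l+B > m$ boundary handled by zero-padding. The paper itself omits the proof, deferring to the analogous Lemmas 4 and 5 of the original SK-LSH paper, and your argument is exactly the expected adaptation of that route to the extended distance $dist_e$, so it fills the gap in essentially the intended way.
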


\begin{lemma}
\label{lemma:linear-order-holds-on-dist-e-2}
In a hashkey array sorted by the original SK-LSH linear order, for any three ordered hashkeys $K$, $K_1$, $K_2$, $dist_e(K_2, K) \geq dist_e(K_1, K)$.
\end{lemma}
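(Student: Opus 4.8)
The plan is to prove both lemmas simultaneously, since they are mirror images of each other: Lemma~\ref{lemma:linear-order-holds-on-dist-e-1} treats the ordered triple $K_2 \le K_1 \le K$ and Lemma~\ref{lemma:linear-order-holds-on-dist-e-2} the reflected triple $K \le K_1 \le K_2$, where $\le$ denotes the SK-LSH linear order, which on binary random-projection hashkeys is exactly the lexicographic order. Writing $m$ for the hashkey length and $l_i$ for the length of the common prefix of $K_i$ and $K$ (so that $KL(K_i,K)=m-l_i$), the whole argument turns on comparing $l_1$ and $l_2$, and I expect the case $l_1=l_2$ to be where the real work lies.

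First I would establish a monotonicity of common-prefix length: for either ordering, the two \emph{outer} keys share a prefix with $K$ that is no longer than the one the middle key shares with $K$, i.e. $l_2 \le l_1$. I would prove this directly from lexicographic order: if instead $l_2 > l_1$, then $K_2$ would agree with $K$ (and hence with $K_1$, which also agrees with $K$ on its first $l_1$ bits) on the first $l_1$ positions, yet would match $k_{l_1+1}$, whereas $K_1$ differs from $K$ at position $l_1+1$ in the direction forced by the order; comparing $K_1$ and $K_2$ at position $l_1+1$ would then contradict their assumed order. (This is just the ultrametric identity $\mathrm{lcp}(K_2,K)=\min(\mathrm{lcp}(K_2,K_1),\mathrm{lcp}(K_1,K))$ for lexicographic order, specialized to the inequality I need.)

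Then I would split on $l_2$ versus $l_1$. If $l_2 < l_1$, the integer parts already separate the two distances: $KL(K_2,K)=m-l_2 \ge m-l_1+1 = KL(K_1,K)+1$, and since each correction term $KD_e/C$ lies in $[0,1)$ by the choice $C=2^B$, we get $dist_e(K_2,K) \ge KL(K_1,K)+1 > dist_e(K_1,K)$, so the claim holds (even strictly). This case is routine.

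The main obstacle is the \emph{tie} case $l_1=l_2=l$, where the $KL$ terms coincide and I must show $KD_e(K_2,K) \ge KD_e(K_1,K)$ on the correction terms alone. Here both $K_1$ and $K_2$ first disagree with $K$ at position $l+1$; because the hashkeys are binary and ordered, the bit $k_{l+1}$ is forced (it equals $1$ when $K$ is the largest of the three in Lemma~\ref{lemma:linear-order-holds-on-dist-e-1}, and $0$ when it is the smallest in Lemma~\ref{lemma:linear-order-holds-on-dist-e-2}), so the $(l+1)$-th bits of $K_1$ and $K_2$ both equal $1-k_{l+1}$. Consequently the length-$B$ windows of $K_1$ and $K_2$ both lie strictly on the same side of $K$'s window (all below its $Decimal$ value, or all above it), which resolves the absolute values consistently and reduces the desired inequality to a single comparison of the $Decimal$ values of the windows of $K_1$ and $K_2$. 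I would close with the elementary fact that lexicographic order is preserved under taking length-$B$ prefixes: since $K_1$ and $K_2$ agree on their first $l$ bits (both equal $K$ there), their order is decided from position $l+1$ onward, so their $B$-windows satisfy the same inequality as the keys themselves, which is exactly the comparison of $Decimal$ values needed to finish both lemmas.
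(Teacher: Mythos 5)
Your proof is correct, and it is worth noting that the paper does not actually supply its own proof of this lemma: it asserts that Lemmas~\ref{lemma:linear-order-holds-on-dist-e-1} and \ref{lemma:linear-order-holds-on-dist-e-2} follow ``similar steps to the proof of Lemma 4 and 5 in \cite{sklsh-liu2014sk}'' and skips the argument, so your self-contained proof fills a gap rather than paralleling an existing one. Your skeleton --- the ultrametric inequality $l_2 \le l_1$ for common-prefix lengths (your contradiction argument at position $l_1+1$ is sound), the easy case $l_2 < l_1$ where the integer parts already separate the distances because $KD_e/C \in [0,1)$ by the choice $C = 2^B$, and the tie case $l_1 = l_2 = l$ --- does mirror the shape of the original SK-LSH lemmas, but your tie case contains genuinely new content that the paper's deferral conceals: the cited Lemmas 4 and 5 compare single elements ($KD$), whereas $KD_e$ compares length-$B$ windows, and the two observations you make are precisely the extra steps the single-element argument does not supply. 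Namely, binarity forces the $(l+1)$-th bits of both $K_1$ and $K_2$ to equal $1 - k_{l+1}$, so both windows sit on the same side of $K$'s window (their leading bits differ from $K$'s, giving $Decimal(K_{i,l+1:l+1+B}) \ge 2^{B-1} > Decimal(K_{l+1:l+1+B})$ in the Lemma~\ref{lemma:linear-order-holds-on-dist-e-2} orientation), which resolves both absolute values uniformly; and since $K_1$, $K_2$ agree with $K$ (hence with each other) on the first $l$ bits, their lexicographic order is decided from position $l+1$ on, and lexicographic order of equal-length binary strings is preserved under taking length-$B$ prefixes and coincides with numeric order, giving $KD_e(K_2,K) - KD_e(K_1,K) = Decimal(K_{2,l+1:l+1+B}) - Decimal(K_{1,l+1:l+1+B}) \ge 0$. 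The only loose end --- one the paper's definition of $KD_e$ also ignores --- is the boundary case $l + B > m$, where the window runs past the end of the key; since both windows are then truncated to the same length (and the degenerate case $l = m$ gives $dist_e(K_1,K) = 0$ trivially), your argument goes through unchanged, but a one-line remark would make it airtight.
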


Lemma~\ref{lemma:linear-order-holds-on-dist-e-1} and \ref{lemma:linear-order-holds-on-dist-e-2} are straightforward to be proven following similar steps to the proof of Lemma 4 and 5 in \cite{sklsh-liu2014sk}, so we just skip the proof in this paper. In conclusion, by extending $dist$ and $KD$, we effectively improve the resolution of the hashkey distance. 

\subsection{Improvement on parallelism}
\label{sec:esk-lsh-improve-parallelism}
In addition to extending the similarity metrics and hashkey distance, ESK-LSH also achieves a higher efficiency than SK-LSH by improving its parallelism. In the original SK-LSH, though it may maintain multiple sorted arrays, the expansion search is not parallel but iterative on the arrays. Specifically, it iteratively looks for the next globally closest hashkey to the query hashkey across all the arrays, which limits its recall within a specific period of search time. In order to efficiently retrieve more candidate hashkeys, we increase the parallelism by making each array independent from others, i.e., ESK-LSH does not look for the globally closest hashkey, but the next closest hashkey locally on each array. Therefore ESK-LSH can do the expansion search parallelly on each array. Such an improvement makes it possible to use more sorted arrays to increase retrieval quality with only tiny time overhead (e.g., the cost of operating system to manage more threads), as long as the hardware resources are sufficient, e.g., enough number of CPU cores. Though in practice the hardware resources might be insufficient such that the parallelism may be limited, we show in the evaluation that a low-end to middle-end machine (with less than 30 cores) is enough for LIDER to outperform the baselines.

\section{RMI and Key re-scaling}
\label{sec:rmi}
As the first learned index for range query, \textit{recursive-model index} (RMI)~\cite{case-for-learned-index-structures-kraska2018case} works on finding out the locations of given keys in some given ordered array, just like what a B-tree index does. But unlike traditional indexes that lookup the locations, RMI ``predicts'' the locations. Specifically, as illustrated in Figure~\ref{fig:arch-core-model}, RMI is a hierarchical structure consists of several layers of simple machine learning models, e.g., shallow neural network or linear regression model. Within one layer, the whole search space (i.e., all locations in the ordered array) is partitioned by all the models in this layer, where each model takes responsibility of a subspace. The key will be first input to the root model. The root model predicts its location, and passes the key to one of the next layer models which corresponds to the subspace containing the location. Then such a predict-and-pass process is repeated layer by layer until one of the final layer models accepts the key and makes the final prediction of the location as the output of RMI. In this process the prediction is gradually refined and finally the error is minimized.
Comparing to traditional indexes like B-tree, RMI has a nearly constant search time with the dataset size. This is because RMI can handle larger dataset by only increasing the width (i.e., the number of models in each layer) while fixing the depth (i.e., the number of layers), and the prediction time mainly depends on the depth.
Specifically, placing more models in one layer will reduce the size of each subspace, such that each model can better fit the smaller subspace. So increasing the width of RMI is often an effective way to improve search quality with only tiny growth on search time.

\subsection{Key re-scaling}
\label{sec:key-rescaling-module}
In order to let RMI accept the hashkeys generated by ESK-LSH, we need to convert each hashkey to a numeric value, so called ``RMI key'' in this paper. We design a \textit{key re-scaling} module for this conversion. In the conversion, the binary hashkey is first seen as a binary number and converted into a decimal integer number, then the decimal integer number is further re-scaled to a floating-point number in the range $[0, L_{array}-1]$, where $L_{array}$ is the length of each sorted hashkey array in ESK-LSH. The re-scaling is the most critical step in the conversion, which has a significant effect on the learning of RMI. This is because on large-scale dataset, the length of hashkey is normally long to have a large enough capacity for encoding the data embeddings. 
Thus after the first-step conversion, the decimal integer numbers are mostly very big. However, comparing to the big integer RMI keys at this step, the labels, i.e., the locations, are much smaller. So most predictions of RMI will be out of range (i.e., much less than 0 or much larger than $L_{array}-1$), making RMI ineffective.

To solve this issue, we use min-max normalization as the second-step re-scaling method. Formally, the normalization is defined as
\begin{align}
    x_{norm} = \frac{x - x_{min}}{x_{max} - x_{min}} (b-a) + a
\end{align}
where $x$ is the original integer key, $x_{min}$ and $x_{max}$ are the minimum and maximum of $x$, $x_{norm}$ is the key after normalization, and $[a, b]$ is the range of $x_{norm}$. In LIDER, we set $a = 0$, $b = L_{array}-1$. The effect of such normalization is evaluated in  Section~\ref{sec:exp-key-rescaling}.

There might be another issue of duplicate RMI keys. Because there is a possibility (which is very low) that LSH generates the same hashkey for different embeddings, duplicate RMI keys might exist. And due to the float-point number precision, after normalization, some originally different RMI keys may also become duplicate. In such cases, duplicate keys still correspond to different locations, i.e., the same training data points have different training labels, which may lower the training quality of RMI. But fortunately, these duplicate keys are adjacent in each sorted array, meaning the error caused by such an issue is bounded in a local range. As long as the local range is small enough, this issue will not affect RMI too much. For instance, we can set the length of hashkey to be large enough to reduce the number of duplicate hashkeys.  

\subsection{Simplified RMI}
The original RMI in \cite{case-for-learned-index-structures-kraska2018case} uses two layers of models, where the top level includes one neural network and the second level includes several linear regression models. We also use a similar two-layer structure, but based on the re-scaled RMI keys, we simplify RMI to only include linear regression models since the RMI keys and their labels/locations are almost linearly distributed, as illustrated in Figure~\ref{fig:rmi-data}. Thus a linear regression on the top can fit better than a neural network, which can also reduce the prediction time. Furthermore, we do not deploy the hybrid index strategy (as introduced in Section~\ref{sec:introduction}) in the RMI for strict error bounding, due to the requirement on high efficiency and the fact that prediction error has been reduced significantly by the key re-scaling module (which is further discussed in Section~\ref{sec:exp-key-rescaling}). 

 \begin{figure}[!h]
  \centering
  \includegraphics[width=0.9\columnwidth]{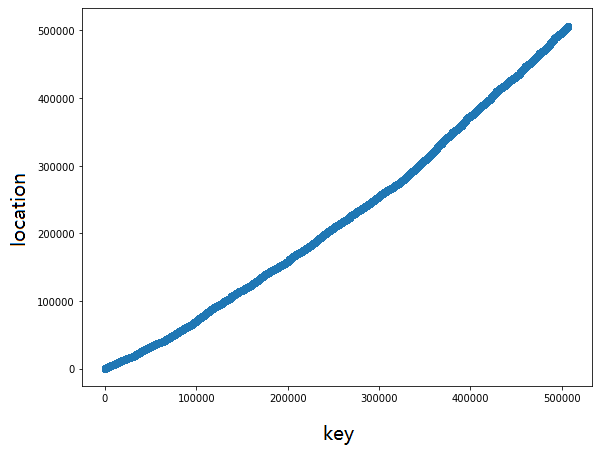}
  \caption{RMI training data distribution (MS-500k dataset as example)}
  \setlength{\belowcaptionskip}{-50pt}
  \label{fig:rmi-data}
\end{figure}

\section{Time complexity}
\label{sec:time-complexity}
We derive the search time complexity of LIDER in this section. Table~\ref{tab:notations} lists major notations used in this and all the following sections. 
The search process of LIDER includes three stages, the centroids retrieval, in-cluster retrieval and the final verification, where the first is simply a standard search process of a single core model, the second is a parallel search with multiple core models (whose time complexity is the same as single core model theoretically), and the third is a simple top-$k$ selection process.    

\begin{table}[!h]
\small
  \begin{tabularx}{\columnwidth}{lX}
    \toprule
     Notation & \multicolumn{1}{c}{Description}  \\
    \midrule
    \ $k_m, k$ & Number of output points by a core model, and that by LIDER  \\
    \midrule
    \ $H$, $M$ & Number of hashkey arrays, and hashkey length in ESK-LSH \\
    \midrule
    \ $R$, $r_0$ & Length of ESK-LSH expansion range on each hashkey array, and a user-specific factor ($R = r_0k_m$)  \\
    \midrule
    \ $c$, $c_0$ & Total number of clusters, and number of retrieved centroids by the centroids retriever  \\
    \midrule
    \ $N$ & Dataset size  \\
    \bottomrule
  \end{tabularx}
  \vspace{1mm}
  \caption{List of notations used in this and following sections}
  \label{tab:notations}
  \vspace{-8mm}
\end{table}

\subsection{Time complexity of a single core model}
The search in a single core model includes five steps: (1) query hashkey generation, (2) query hashkey rescaling, (3) RMI prediciton, (4) ESK-LSH expansion and (5) candidate verification, where the step 2 and 3 are constant with the data scale. So we analyze the time complexity for the rest three steps as follows:
    In \textit{query hashkey generation} step, ESK-LSH generates one query hashkey per array, 
    where each needs $M$ hashings, costing $O(HM)$ time.
    As a linear scanning on range $R$, complexity of \textit{ESK-LSH expansion} step is simply $O(R)$ (since the expansion is parallel on all hashkey arrays).  
    The \textit{candidate verification} computes exact distances between the query and candidates (found by ESK-LSH expansion), and selects top-$k_m$ of them as output. Since there are $RH$ candidates and the output top-$k_m$ are not required to be sorted, this step takes $O(RH)$ time. 
We set $M = \lceil logN \rceil$ to guarantee the hashing space has enough capacity, and $H$ and $r_0$ are fixed. Given that $R = r_0k_m$, we have the overall search time complexity of a single core model
\begin{align}
    O(H\lceil logN \rceil + Hr_0k_m)
\end{align}

\subsection{Time complexity of LIDER search process}
For the centroids retriever, $k_m = c_0$.
And as discussed in Section \ref{sec:exp-cluster-param}, we recommend to set $c_0 = c/100 \sim c/50$, and $c = N/(5\times10^4) \sim N/(10^4)$, i.e., $c_0 = N/(5\times10^6) \sim N/(5\times10^5)$. We present it as $c_0 = t_0N$ for simplification. So the time complexity of centroids retrieval stage is
\begin{align}
    O(H\lceil logN \rceil + Hr_0t_0N)
\end{align}
For each in-cluster retriever we set $k_m = k$,
then time complexity of in-cluster retrieval stage is $O(H\lceil logN \rceil)$
as term $Hr_0k$ is constant with $N$ and this stage is parallelized. 
In addition, an in-cluster retriever can optionally sort the $k$ results, which
is not a burden on time complexity but benefits the next stage. \\
The third stage is simply selecting top-$k$ from the totally $c_0k$ neighbors found by the $c_0$ in-cluster retrievers, with the exact distances computed in the in-cluster retrieval stage. 
As mentioned above, the $c_0$ neighbor lists can be sorted by the in-cluster retrievers, based on which the top-$k$ selection here can be completed in $O(c_0 + klogc_0)$ time, just by utilizing a heap of size $c_0$ that maintains the head of each list.

In summary, the overall search time complexity of LIDER is 
\begin{align}
    & O(2H\lceil logN \rceil + klog(t_0N) + (Hr_0+1)t_0N) 
\end{align}
Given the fact that $t_0$ is tiny ($2 \times 10^{-7} \sim 2 \times 10^{-6}$) and $H$ and $r_0$ are also small (like $H = 10$ and $r_0 < 10$ in our evaluation), the factor $(Hr_0+1)t_0$ is around $10^{-6}$ in practice. This means before $N$ reaches tens $\sim$ hundreds of millions, the complexity is close to logarithmic, then it gets closer to linear (but with a tiny factor) on larger $N$, presenting the high efficiency and scalability of LIDER theoretically. And our evaluation proves this in practice.

\section{Experiments}
\label{sec:exp}
\subsection{Experiment settings}
All experiments are evaluated on a Lambda Quad workstation with 28 3.30GHz Intel Core i9-9940X CPUs, 4 RTX 2080 Ti GPUs and 128 GB RAM. Note that all the experiments are conducted purely on the ANN search stage of dense retrieval, excluding the embedding generating stage, i.e., all of them start from the step where the dense embeddings are already generated. We do not include the deep neural model in the evaluation, but directly use their generated embeddings as input. In addition, all the baseline methods and LIDER are evaluated purely on CPU. We do not utilize GPU for the ANN search in this paper since the linear regressions in LIDER are simple to compute such that it is not necessary to use GPU, though they have the potential to be further accelerated by GPU.  

\subsubsection{Evaluation datasets, tasks and metrics} \hfill\\
\label{sec:eval-ds-and-tasks}
We use two datasets for our evaluation, MS MARCO and Wiki-21M.
\textbf{MS MARCO: }
The MS MARCO passage retrieval (a.k.a., passage ranking) dataset \cite{msmarco-nguyen2016ms} is one of the most commonly used dataset in passage retrieval research. It includes a collection of 8.8M passages from online webpages, several collections of queries used by different tasks and the corresponding groundtruth collections to those queries (i.e., for each query collection, there is a relevant passage collection including relevant (query, passage) pairs). We conduct our experiments on two tasks from MS MARCO, \textit{MS MARCO Dev} and \textit{TREC2019 DL}, which share the same passage collection but use different queries and different performance metrics. Specifically, (1) MS MARCO Dev has 6980 queries where each query has one or more (but only a few) relevant passages, and we measure the performance by MRR@10 for quality and average query processing time (AQT) for efficiency on this task. (2) TREC2019 DL has 43 valid queries and about 9000 (query, passage) pairs in its groundtruth collection. We evaluate the performance using the quality metric NDCG@10 and the efficiency metric AQT on this task.

To explore the impact of data scale on different methods, we sample the MS MARCO dataset into several subsets whose number of passages are respectively 100k, 500k, 1M, 4M (while the full MS MARCO dataset includes 8.8M passages), named as ``MS-'' followed by the size (like ``MS-500k''), where the ``MS-8.8M'' is the full dataset itself. Both of MS MARCO Dev and TREC2019 DL tasks are evaluated on all those subsets. The query and passage embeddings are generated by deep neural model ``msmarco-distilbert-base-v3'' \protect\footnote{available at \protect\url{https://www.sbert.net/docs/pretrained-models/msmarco-v3.html}},  pre-trained on MS MARCO dataset.

\noindent\textbf{Wiki-21M: } 
Another dataset is Wiki-21M from~\cite{DPR-karpukhin-etal-2020}. It includes 21,015,324 passages collected from Wikipedia dump. The evaluation queries we use on this dataset is Natural Questions (NQ)~\cite{DPR-karpukhin-etal-2020} which is designed for end-to-end question answering, with questions collected from real
Google search queries and the answers from Wikipedia articles. The test set of NQ includes 3610 queries. We name this task (i.e., retrieving passages from Wiki-21M dataset to answer the queries in NQ test set) as ``Wiki-21M NQ''. Similar to MS MARCO Dev, the experiment metrics on this task are MRR@10 for quality and average query processing time (AQT) for efficiency. The embeddings of this dataset are generated by pre-trained DPR model~\cite{DPR-karpukhin-etal-2020}  \protect\footnote{the pre-trained model (named as ``checkpoint.retriever.single-adv-hn.nq.bert-base-encoder'') and corresponding Wiki-21M passage embeddings are available at \protect\url{https://github.com/facebookresearch/DPR}}

The embeddings for both MS MARCO and Wiki-21M are 768-dimensional. The total size of MS MARCO passage embeddings is 26GB while that of Wiki-21M embeddings is 62GB.
The embedding similarity metric in all the experiments is cosine similarity. Since our baseline methods do not support cosine similarity but work for inner product similarity, we normalize all the query and passage embeddings such that cosine similarity is equivalent to inner product over the normalized embeddings.

Note that all the time results reported by the experiments are measured after the embedding generation, i.e., they are only the ANN search time, excluding the embedding generation time of the neural model. 

\subsubsection{Baselines}\hfill\\
\label{sec:baselines}
We select several widely used high-dimensional ANN indexes as baselines. They are introduced as follows:
\begin{enumerate}
    \item \textbf{Flat}: It simply searches the exact KNN by brute-force. So we treat its retrieval quality as the upper bound of effectiveness for LIDER and other baseline methods. 
    \item \textbf{PQ}: This is an ANN index that encodes high-dimensional data into shorter codes by product quantization, and does ANN search using the codes to reduce computation.
    \item \textbf{OPQ} \cite{OPQ}: This is an improved variant of PQ index which optimizes PQ to better fit the data by applying a rotation. It has a better effectiveness than PQ. 
    \item \textbf{PCA-PQ} \cite{PCA-PQ}: This is another improved variant of PQ index that applies PCA dimension reduction to the data before encoding it using PQ. 
    \item \textbf{IVFPQ} \cite{ivfadc}: It implements the classic ``inverted index + product quantization'' ANN index, IVFADC in \cite{ivfadc}, which is one of the fastest high-dimensional ANN indexes today.
    \item \textbf{IVFPQ-HNSW}: This method further optimizes the IVFPQ index by using HNSW \cite{Graph-based-index-hnsw-malkov2016} to do the cluster assignment and management for the inverted index, which further improves the search efficiency.
    \item \textbf{FALCONN} \cite{falconn}: This is a mature and high-performance LSH index library based on the classic multi-probe LSH \cite{multiprobe-lsh}, which is one of the most practical LSH methods in real-world applications.
    \item \textbf{SK-LSH}: We include the original SK-LSH index as one of our baselines, in order to evaluate the improvement to it by our idea. 
\end{enumerate}
There are also other popular high-dimensional ANN indexes, but due to some reasons we do not include them. For example, ScaNN \cite{scann} requires users to compile it with AVX support (which provides a non-trivial acceleration using the facility of modern hardware), while LIDER does not utilize such a technique currently since it needs much effort on the engineering. Therefore, to guarantee a fair comparison, we do not include ScaNN in the baselines.

The Flat and all PQ-based baselines are implemented using FAISS \cite{faiss}, a high-performance industrial ANN index library. FALCONN also provides a public codebase, while SK-LSH is implemented by us as there is no open-source implementation. We set the parameters of the baseline indexes as such: (1) Flat is an exhaustively exact search index, so there is no specific parameter to set. (2) For IVFPQ and IVFPQ-HNSW, $C = \sqrt{N}$, $m = 32$, $b = 8$, $p = 500$, where $N$ is the number of passage embeddings in current dataset (as in Table~\ref{tab:notations}), $C$ is the number of centroids associated with the coarse quantizer in IVFADC, $m$ is the number of segments into which each embedding will be split, $b$ is the number of bits used to encode the centroids associated with the product quantizer in IVFADC, and $p$ is the number of nearest inverted file entries to be inspected during search. As recommended by~\cite{faiss}, we dynamically compute $C$ based on dataset size $N$ instead of fixing it. The number of neighbors per node and search depth of the HNSW in IVFPQ-HNSW are both set to be 32. (3) For PQ and OPQ, their parameters are just a subset of those for IVFPQ, i.e., the $m$ and $b$, which are same as IVFPQ. (4) For PCA-PQ, the parameters of its PQ component are the same as OPQ, while its PCA component reduces the data dimension to 192. (5) For FALCONN and SK-LSH, we set the number of hash tables/hashkey arrays $H = 24$ and the hashkey length $M = \lceil log_2(N) \rceil$. Since the memory requirement of SK-LSH on Wiki-21M exceeds the machine limit, we reduce its $H$ to 14 on Wiki-21M.   
For all the experiments in this paper, we set $k=100$, i.e.,always retrieving top-100 relevant passages to each query.

\subsubsection{Experiment categories}\hfill\\
We design two major categories of experiments, 
\begin{enumerate}[leftmargin=2em]
    \item \textbf{end-to-end retrieval evaluation}: This evaluation is conducted on all of MS MARCO Dev, TREC2019 DL and Wiki-21M NQ tasks. It reports retrieval efficiency and quality by the corresponding metrics introduced in Section~\ref{sec:eval-ds-and-tasks} for LIDER and the baselines. To make it complete, this evaluation consists of two parts: (a) evaluation on varying datasets with fixed method parameters, and (b) evaluation on fixed datasets with varying parameters.
    \item \textbf{evaluation of critical parameters, memory usage and construction cost}: This evaluation includes several experiments for the impact of critical parameters on LIDER (Section \ref{sec:exp-esk-lsh}, \ref{sec:exp-key-rescaling} and \ref{sec:exp-cluster-param})
    as well as the memory footprint and construction cost of LIDER (Section \ref{sec:exp-memory-construction}). 
\end{enumerate}

\begin{table*}[t]
\small
\centering
\begin{tabularx}{\textwidth}{l|ccccc|c|ccccc}
\toprule
  & \multicolumn{5}{c|}{\shortstack{\textbf{MS MARCO Dev} \\ \textbf{(MRR@10)}}} & \multicolumn{1}{c|}{\shortstack{\textbf{Wiki-21M NQ} \\ \textbf{(MRR@10)}}} & 
  \multicolumn{5}{c}{\shortstack{\textbf{TREC2019 DL} \\ \textbf{(NDCG@10)}}} \\
\toprule
\textbf{Method} & MS-100k & MS-500k & MS-1M & MS-4M & MS-8.8M & Wiki-21M & MS-100k & MS-500k & MS-1M & MS-4M & MS-8.8M \\ 
\toprule
Flat (\textit{exact}) & 0.8511 &  0.7227 & 0.6496 & 0.4922 & 0.3314
 & 0.5518
 &  0.5681 & 0.4726 & 0.4769 & 0.5762 & 0.6707 \\
\midrule
PQ   & 0.7721 &  0.6304 & 0.5588 & 0.4129 & 0.2734
 & 0.2145
& 0.4585 &  0.4204 &  0.4083 &  0.4533 & 0.5802\\
OPQ   & \textbf{0.8143} &  \textbf{0.6742} & \textbf{0.5994} & \textbf{0.4392} & 0.2907
 & -
& \textbf{0.5658} &  0.4360 &  0.4100 &  0.4893 & 0.5974\\
PCA-PQ   & 0.8001 &  0.6575 & 0.5778 & 0.4248 & 0.2816
 & 0.4513
& 0.5438 &  0.4013 &  0.4080 &  0.4824 & \textbf{0.5997} \\
IVFPQ  & 0.6152 &  0.4811 & 0.4349 & 0.3107 & 0.2154
 & 0.2066
 &  0.4228  & 0.3420  & 0.3584  & 0.4215  & 0.4973 \\
IVFPQ-HNSW  & 0.6151 &  0.4784 & 0.4274 & 0.3138 & 0.212
 & 0.2133
 & 0.3963 &  0.3230 &  0.3433 & 0.3911 &  0.4929\\
FALCONN  & 0.7543 &  0.6402 & 0.5765 & 0.426 & 0.2882
 & 0.3175
 & 0.4712 &  0.3453 &  0.3767 & 0.5232 & 0.5595\\
SK-LSH  & 0.7893 &  0.5785 & 0.5045& 0.3225 & 0.2226
 & 0.2702
 & 0.5386 &  0.3429 &  0.3988 &  0.3806 & 0.4662\\
LIDER  & 0.7428 &  0.6225  & 0.5667 & 0.4292 & \textbf{0.2908}
& \textbf{0.4671}
&  0.4684  & \textbf{0.4548}  & \textbf{0.4366}  & \textbf{0.5363}  & 0.5861\\
\midrule

\bottomrule
\end{tabularx}
\vspace{2mm}
\caption{End-to-end retrieval quality for all three evaluation tasks. The MS MARCO Dev and TREC2019 DL tasks are based on the same subsets of the MS MARCO dataset, while Wiki-21M NQ task is based on the Wiki-21M dataset.}
\vspace{-5mm}
\label{tab:end-to-end-retrieval-quality}
\end{table*}

\subsection{End-to-end retrieval evaluation}
\label{exp:end-to-end-retrieval}
This section introduces the two parts of the end-to-end retrieval evaluation, where the first is on varying datasets and the second is on varying method parameters.
\subsubsection{Evaluation on varying datasets and tasks} \hfill\\
In the first part of the end-to-end retrieval evaluation, the parameters of baseline methods are set as stated in Section~\ref{sec:baselines}, and 
we fix the parameters of LIDER as such (which are selected by grid search): the number of clusters $c = 1000$, the number of retrieved centroids by centroids retriever $c_0 = 20$, the number of ESK-LSH arrays in any core model $H = 10$, and the RMI width (i.e., the number of the second-layer models in an RMI) in centroids retriever $W_c = 10$ while that of each in-cluster retriever $W_i = 5$.
Normally larger $H$ and $W$ lead to better retrieval quality with slightly lower efficiency, while the effect of the clustering related parameters $c$ and $c_0$ is more complex, which will be explored in Section~\ref{sec:exp-cluster-param}.

Table~\ref{tab:end-to-end-retrieval-quality} reports the scores to measure retrieval quality in the three tasks. 
Flat is annotated as \textit{exact} to highlight that it finds the exact k nearest neighbors to the query, therefore we consider its retrieval quality scores as the upper bound for other methods. The other methods are all approximate search methods.
We also highlight the highest score (among the approximate methods only) on each dataset in Table~\ref{tab:end-to-end-retrieval-quality}. OPQ has no score for Wiki-21M NQ, since it requires huge memory space on Wiki-21M which 
exceeds the memory capacity of our evaluation machine. 
According to the scores, LIDER achieves the best retrieval quality among all ANN methods in most cases. Though OPQ is competitive to LIDER in the smaller datasets, LIDER outperforms it on the larger datasets, especially given the fact that OPQ cannot be performed on the largest dataset.   
Therefore, we can still conclude that LIDER has higher effectiveness than all ANN baselines on large-scale data.

\begin{figure}[!t]
  \centering
  \includegraphics[width=\columnwidth]{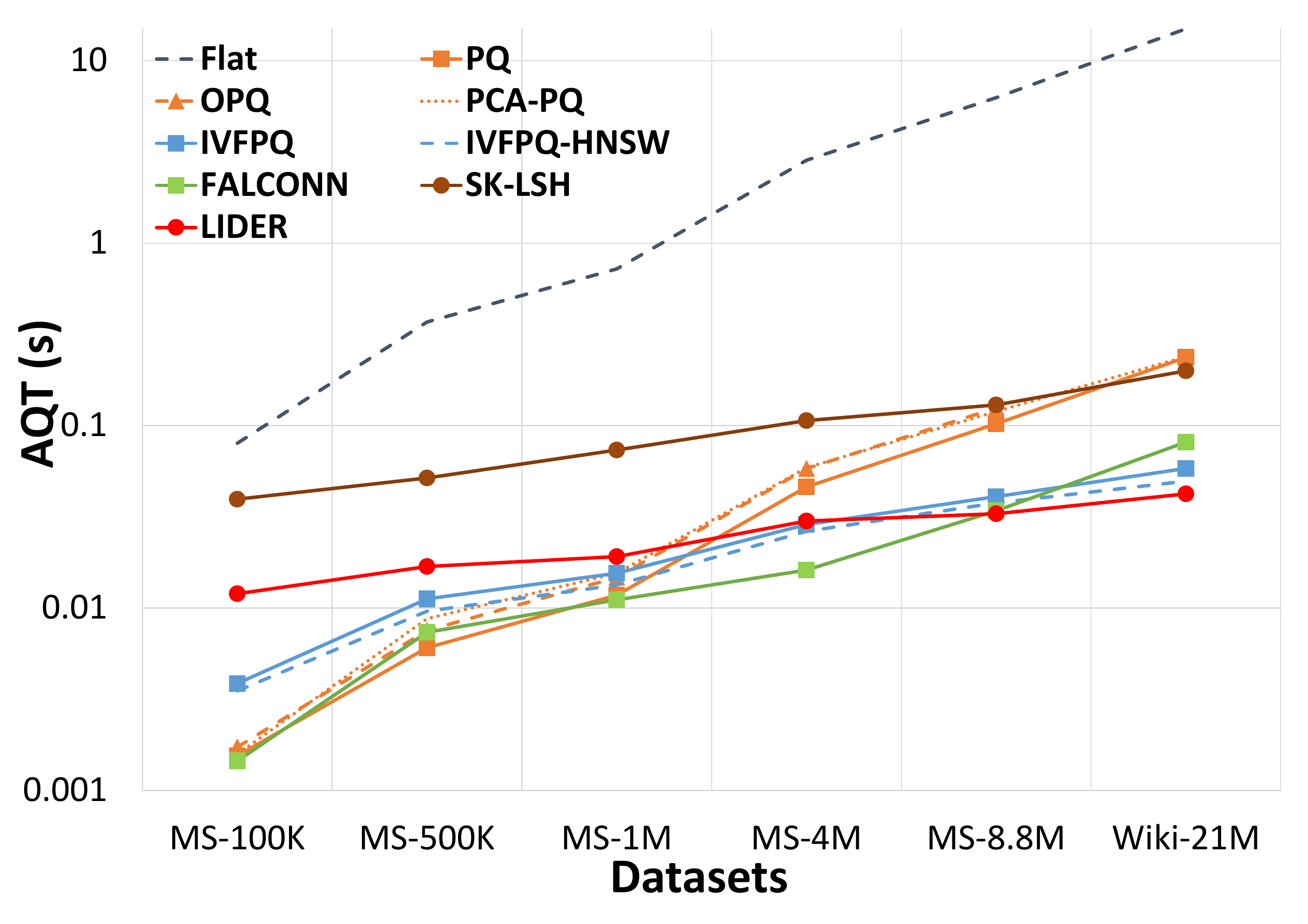}
  \caption{Average query processing time on MS MARCO Dev and Wiki-21M NQ for all the methods}
  \setlength{\belowcaptionskip}{-50pt}
  \label{fig:end-to-end-ms-marco}
\end{figure}

Figure~\ref{fig:end-to-end-ms-marco} illustrates the average query processing time, AQT, for end-to-end retrieval by LIDER and the baselines on MS MARCO Dev and Wiki-21M NQ tasks. 
AQT results of all the methods on TREC2019 DL task are very similar to those on MS MARCO Dev, therefore we do not show them in this paper.

By the figure, Flat (the exact search method, presented by the grey dashed line on the top) takes the longest query processing time, which also increases fastest with the data scale. 
LIDER (the red solid line with circle points) outperforms all baselines on the largest datasets (i.e., MS-8.8M and Wiki-21M) since its average retrieval time grows slowest with data scale, which complies to the time complexity analysis in Section~\ref{sec:time-complexity} and proves the high efficiency of LIDER in practice. Particularly, on the largest datasets (MS-8.8M and Wiki-21M), comparing to the fastest baseline method (i.e., IVFPQ-HNSW which is shown as the blue dashed line), LIDER achieves 15\% $\sim$ 20\% speedup and much higher retrieval quality. Comparing to the highest-quality ANN baseline methods (i.e., OPQ and PCA-PQ which are shown as orange dashed line with triangle points and orange dotted line), LIDER still has better quality on the largest datasets with more significant speedup, i.e., 300\% $\sim$ 500\%. We also observe that SK-LSH has a similar AQT growth trend to LIDER, but its base AQT is too high. Note that in the figure, the distribution of dataset labels on the x-axis are not proportional to the dataset sizes, instead, the labels are evenly placed to have a better view. But this does not affect the comparison of AQT growth trends between different methods. 
Since in real-world applications the data scale can be much larger than that in our experiments, LIDER does have a great potential in highly efficient dense retrieval for real-world scenarios, due to its slow AQT growth trend with the data scale.  

\subsubsection{Evaluation on varying parameters} \hfill\\ 
In the first part experiments above, we fix key parameters of the methods and compare their performance on different datasets and tasks.   
To further evaluate the effectiveness and efficiency of LIDER, in this part we fix the datasets and tasks (i.e., MS MARCO Dev and Wiki-21M NQ), and vary those parameters to draw the AQT-MRR curves for all the approximate methods. The exact method Flat has no parameters to vary its performance, so this part of evaluation does not include it.

Figure~\ref{fig:mrr_vs_time_msmarco_dev} and \ref{fig:mrr_vs_time_wiki_21m} illustrate the AQT-MRR curves for the two tasks respectively. There are some observations: (1) in the range of low MRR, IVFPQ and IVFPQ-HNSW have better efficiency than LIDER, but (2) in the range of high MRR, LIDER is significantly more efficient than the baselines. Actually in our experiments, we find that IVFPQ and IVFPQ-HNSW are hard to achieve a high MRR under the similar resource constraints (e.g., an acceptable length of index building time) to other methods. And that is the reason why the points on the curves of IVFPQ and IVFPQ-HNSW are concentrated in a relatively low MRR area. But it is enough to show the trends, i.e., LIDER has the best performance on effectiveness-efficiency trade-off. In most cases it takes the shortest search time to achieve the same retrieval quality as the baselines, and its efficiency is easy to be further improved with only tiny sacrifice on the quality. Or vice versa, it is also able to significantly enhance the retrieval quality with a tiny loss on search speed. Such a good capability of trade-off does make LIDER practical.        

\begin{figure}[!t]
  \centering
  \includegraphics[width=0.9\columnwidth]{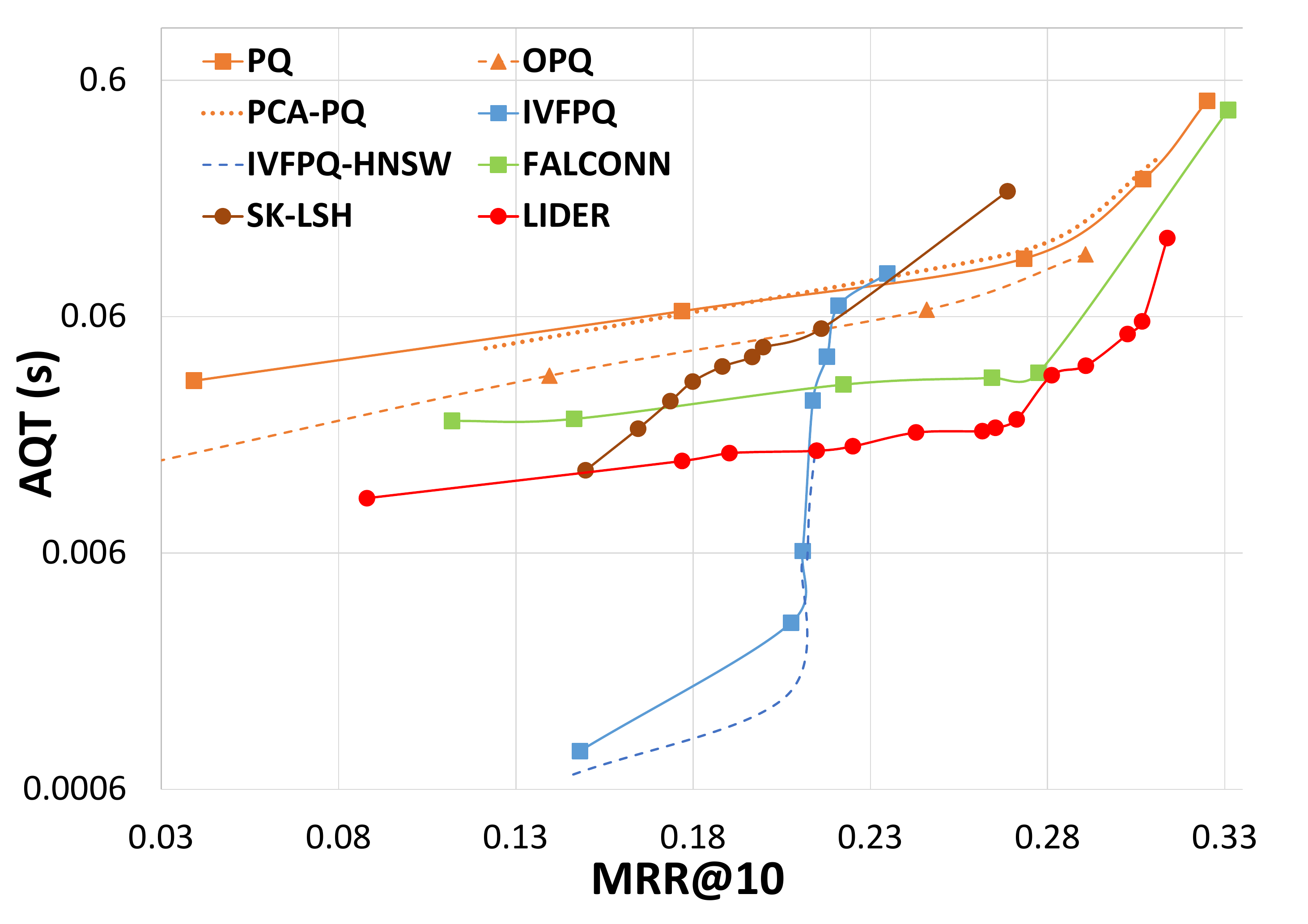}
  \caption{Average query processing time vs MRR@10 for all ANN methods on MS MARCO Dev}
  \setlength{\belowcaptionskip}{-50pt}
  \label{fig:mrr_vs_time_msmarco_dev}
\end{figure}

\begin{figure}[!t]
  \centering
  \includegraphics[width=0.9\columnwidth]{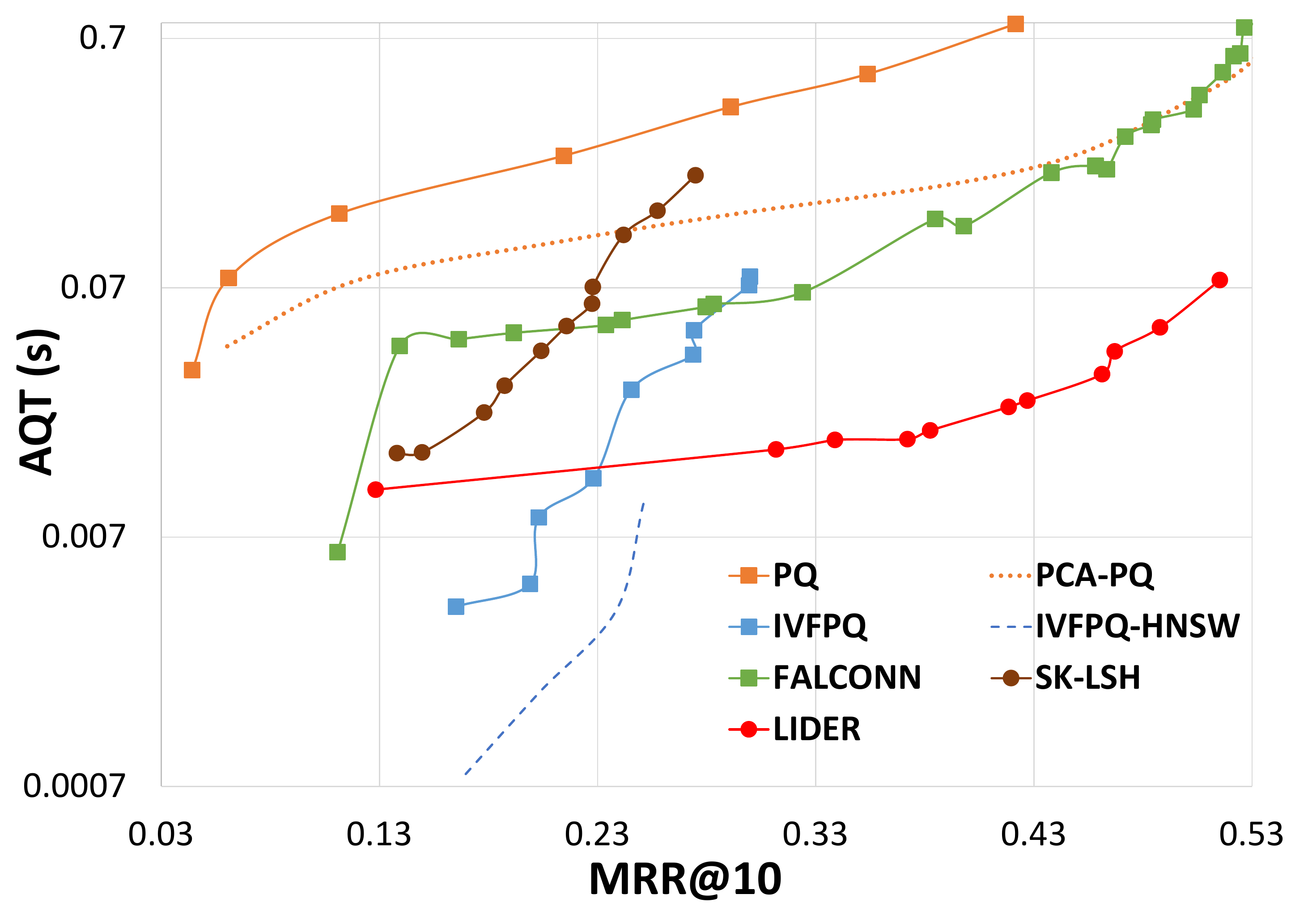}
  \caption{Average query processing time vs MRR@10 for all ANN methods except OPQ on Wiki-21M NQ}
  \setlength{\belowcaptionskip}{-50pt}
  \label{fig:mrr_vs_time_wiki_21m}
    
\end{figure}

\subsection{Impact of $H$ in ESK-LSH}
\label{sec:exp-esk-lsh}

\begin{table}[!h]
\small
  \begin{tabular}{ccc}
    \toprule
     $H$ & MRR@10 & Average expansion time \\
    \midrule
    \ 32 & 0.4928 & 0.0375s  \\
    \ 48 & 0.5569 & 0.0399s  \\
    \ 64 & 0.5912 & 0.0492s  \\
    \bottomrule
  \end{tabular}
  \vspace{1mm}
   \caption{Retrieval MRR@10 and average ESK-LSH expansion time of the standalone core model with different values of $H$  on MS-1M dataset}
   \label{tab:exp-esh-lsh}
  \vspace{-8mm}
\end{table}

The ESK-LSH expansion time takes a large portion in the end-to-end retrieval time of LIDER, so in this experiment, we explore the effect of the critical ESK-LSH parameter, $H$, on the core model performance.
To simplify the evaluation, we build a standalone core model on MS-1M dataset without the clustering-based architecture. So $H$ may be different from the end-to-end evaluation, but the trend of its effect on the performance remains the same. We set $H = 32, 48, 64$ and report MRR@10 and average ESK-LSH expansion time of this single core model in Table~\ref{tab:exp-esh-lsh}, which shows that by using more arrays, the retrieval quality of the core model is significantly improved, with only tiny time overhead added on ESK-LSH expansion. This proves the way to increase the parallelism of SK-LSH (as discussed in Section~\ref{sec:esk-lsh-improve-parallelism}) is effective.  

\subsection{Impact of the key re-scaling module}
\label{sec:exp-key-rescaling}
In this section, we evaluate the effect of key re-scaling module on reducing out-of-range predictions of RMI (as discussed in Section~\ref{sec:key-rescaling-module}).
Specifically, we define a predicted location as \textit{out-of-range prediction} (OOR) if it equals 0 or $L_{array}-1$, as RMI will truncate big prediction to $L_{array}-1$ and round negative prediction to 0. We also define a predicted location as \textit{large-error prediction} (LE) if the gap between it and the true location is larger than 100, i.e., any prediction error exceeding $k$ is a large error where $k = 100$ in our evaluation. Then we check the overlap between OOR and LE predictions, 
which reflects if the large errors are mainly caused by the out-of-range problem. We denote the numbers of OOR, LE and their overlapped predictions by $N_{OOR}$, $N_{LE}$ and $N_{overlap}$. Similar to Section~\ref{sec:exp-esk-lsh}, this experiment is also conducted on a standalone core model.

\begin{table}[!h]
\small
  \begin{tabular}{cccc}
    \toprule
      Using key re-scaling & $N_{OOR}$ & $N_{LE}$ & $N_{overlap}$ \\
    \midrule
    \ No & 4846 & 4733 & 4245  \\
    \ Yes & 3 & 2536 & 0 \\
    \bottomrule
  \end{tabular}
    \vspace{1mm}
    \caption{RMI prediction quality (the number of out-of-range predictions $N_{OOR}$, the number of large-error predictions $N_{LE}$ and the size of their overlap $N_{overlap}$) before and after using key re-scaling module on the MS-100k dataset}
    \label{tab:exp-key-rescaling}
    \vspace{-6mm}
\end{table}

As shown in Table~\ref{tab:exp-key-rescaling}, on MS-100k dataset and the 6980 queries of MS MARCO Dev (where for each query, one prediction is made), before using key re-scaling module, the out-of-range and large-error predictions are heavily overlapped, meaning that most of the large-error predictions are probably caused by the out-of-range problem. And after using the module, all of the three numbers significantly decrease, which means the 
large errors from out-of-range problem have been successfully reduced, and the remaining errors are likely to be from RMI itself. In conclusion, key re-scaling module effectively improves RMI prediction quality.    

\subsection{Impact of the clustering related parameters}
\label{sec:exp-cluster-param}
In this section we investigate the effect of the two clustering related parameters, the number of clusters $c$ and the number of retrieved centroids $c_0$, on the end-to-end performance of LIDER. Experiments are conducted on MS MARCO Dev task using MS-8.8M dataset.

First we fix $c = 1000$ and set $c_0 = 1, 5, 10, 15, 20, 30, 40, 50, 100$ to observe the end-to-end retrieval performance of LIDER. As shown in Figure~\ref{fig:cluster-params-exp-c0}, overall the quality and search time both increase with $c_0$, because increasing $c_0$ will let LIDER retrieve more candidates
on which the exact vector distances are computed and the top-k are finally selected. So more candidates normally lead to more accurate results and longer search time. Another observation in the figure is that the effect of $c_0$ on the retrieval quality improvement is degrading with its growth, i.e., when $c_0$ is small, increasing it will bring more gain on the quality than when it is large, and also bring less latency growth than when it is large. So we recommend users of LIDER to keep $c_0$ relatively small comparing to the total number of clusters $c$, like around $1/100 \sim 1/50$ of $c$.     

\begin{figure}[!t]
  \centering
  \includegraphics[width=0.8\columnwidth]{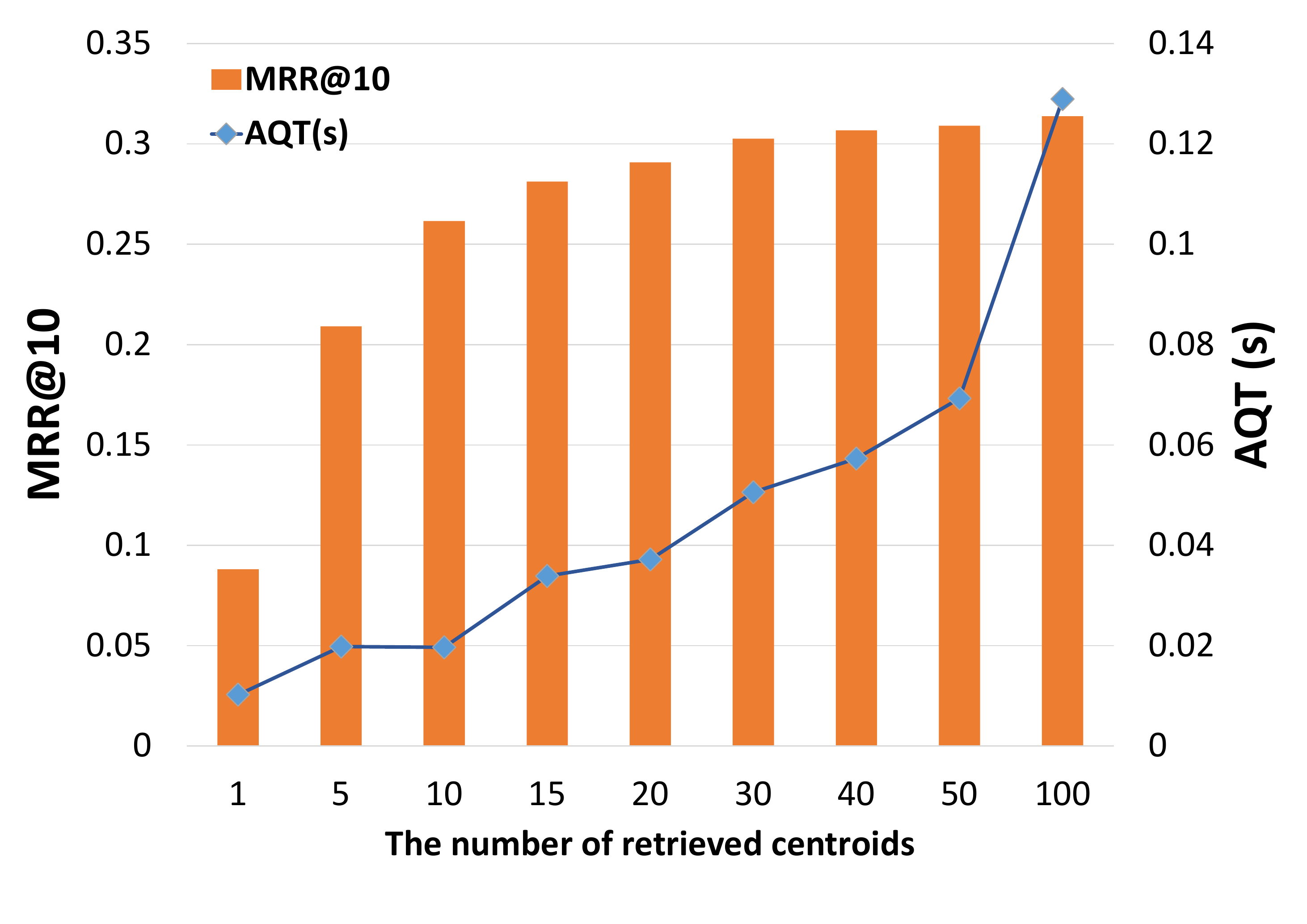}
  \caption{The retrieval quality and average query processing time of different $c_0$ values when $c = 1000$}
  \setlength{\belowcaptionskip}{-50pt}
  \label{fig:cluster-params-exp-c0}
\end{figure}

Second we fix $c_0 = 10$ and vary the total number of clusters $c = 50, 100, 200, 400, 600, 800, 1000, 1200, 1600, 2000, 2400, 3000, 4000$. Figure~\ref{fig:cluster-params-exp-c} presents the MRR@10 and AQT on each $c$ value. The AQT is decreasing with the increasing $c$, which is straightforward to explain: the dataset size is fixed, therefore increasing $c$ means the average size of each cluster will decrease. Given that $c_0$ is fixed, fewer candidates will be retrieved and verified, so AQT decreases. But the MRR@10 is not monotonic, instead, it first increases then falls with the increasing $c$. This is because (1) when $c$ is small, each cluster will include a large number of data points, making it harder for RMI to accurately learn the distribution, 
so the in-cluster retrieval quality degrades. 
And (2) when $c$ is large, each cluster will be small, meaning that the number of candidates is likely not enough since an in-cluster retriever will just stop and return whatever it found when the corresponding cluster is exhausted. In such a case the recall will be probably low, resulting a low MRR. Thus we recommend to set a proper $c$ based on the dataset size, such that each cluster includes around 10k $\sim$ 50k data vectors.  

\begin{figure}[!t]
  \centering
  \includegraphics[width=0.8\columnwidth]{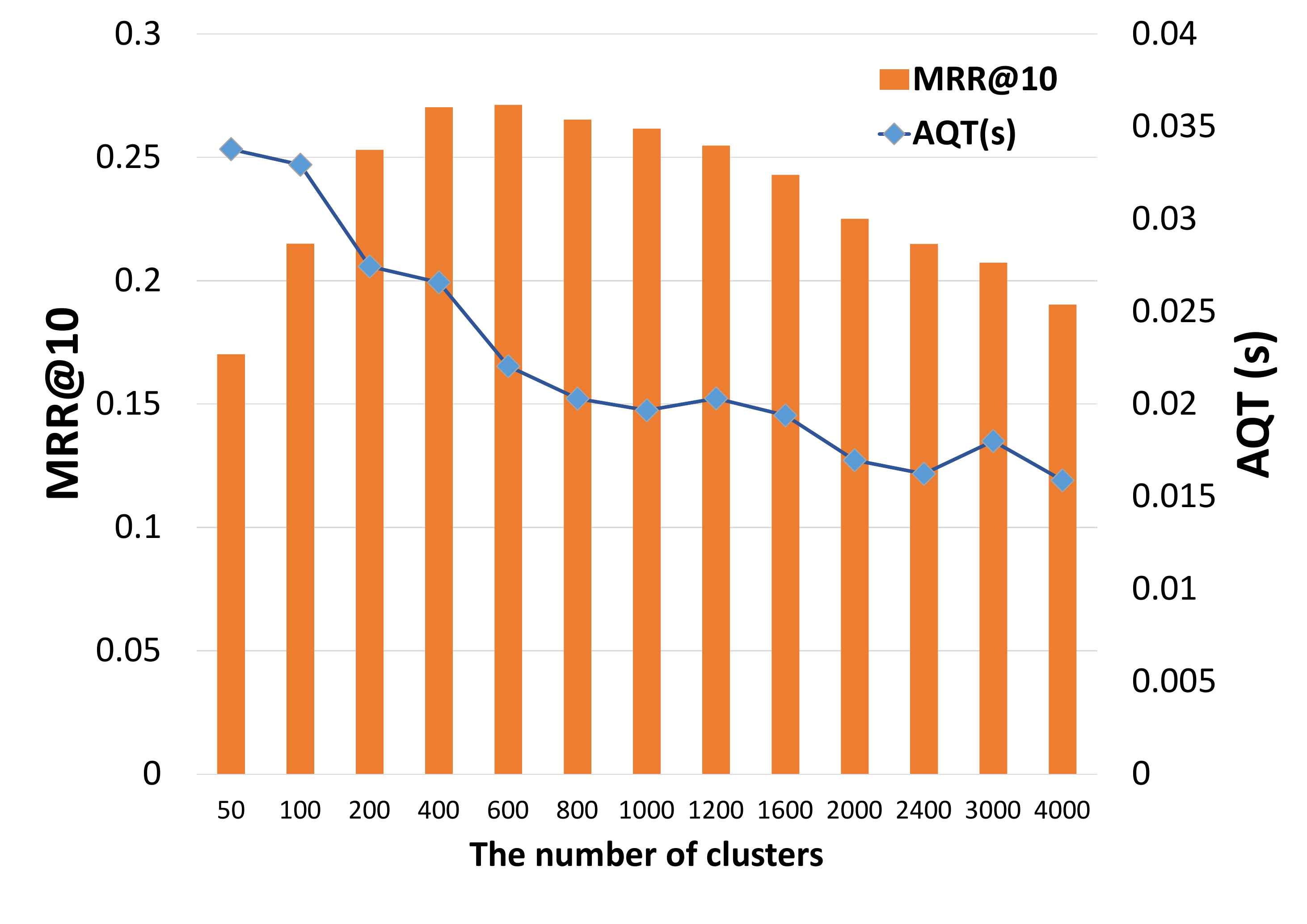}
  \caption{The retrieval quality and average query processing time of different $c$ values when $c_0 = 10$}
  \setlength{\belowcaptionskip}{-50pt}
  \label{fig:cluster-params-exp-c}
\end{figure}

\subsection{Memory footprint and index construction time}
\label{sec:exp-memory-construction}

In this section we report and analyze the memory footprint and index construction time cost of LIDER. Table~\ref{tab:lider-memory} reports elapsed time for the three main construction stages of LIDER (where the construction finishes after Stage 3), as well as memory footprint of LIDER after each stage. In addition, Table~\ref{tab:lider-memory} also includes the construction time and memory usage of the original SK-LSH to show the improvement made by LIDER on the index size. All the memory results in the table have excluded the memory of data embeddings, i.e., they are purely the spaces used to maintain the indexes. The parameters of LIDER and SK-LSH are the same as those in the end-to-end retrieval evaluation (Section~\ref{exp:end-to-end-retrieval}).

\noindent\textbf{Memory footprint:} First, we have the following observations on the LIDER construction stages. By Table~\ref{tab:lider-memory}, memory usage of LIDER after building the centroids retriever (i.e., after Stage 2) is close to the previous stage, meaning that the centroids retriever has a small size, while the memory usage significantly increases after Stage 3, reflecting that the in-cluster retrievers take up the major fraction of the index size. 
Second, comparing to the original SK-LSH, LIDER shows its significance on reducing memory overhead. Specifically, it saves 53\% and 58\% memory space comparing with SK-LSH on the two largest datasets, enabling it to process larger scale data than original SK-LSH on the same hardware. Such a saving is achieved mainly by its cluster-based architecture. As each cluster is significantly smaller than the whole dataset, the required number of hashkey arrays and the length of hashkey can be both reduced while still guaranteeing high effectiveness, e.g.,  in Section~\ref{exp:end-to-end-retrieval} each core model of LIDER only needs 10 arrays to outperform the SK-LSH baseline with 24 arrays.

\noindent\textbf{Construction time:} LIDER presents its significant impact on saving memory usage, with a cost of construction time. 
As in Table~\ref{tab:lider-memory}, we have observed that LIDER has a longer construction time 
than SK-LSH, where the bottleneck is the k-means clustering (Stage 1). However, this is not an issue in practice since there are many approaches to speed up the k-means clustering. For example, the FAISS library provides GPU-based k-means clustering, which can complete LIDER Stage 1 in a few seconds.

\begin{table}[h]
\small
\centering
\begin{tabularx}{\columnwidth}{X|cc|cc}
\toprule
& \multicolumn{2}{c|}{\textbf{MS-8.8M}} & \multicolumn{2}{c}{\textbf{Wiki-21M}} \\
\toprule
 & Time & Memory & Time & Memory \\
\midrule
LIDER Stage 1 - Clustering & 818s & 306MB & 1336s & 725MB\\
LIDER Stage 2 - Building CR & 0.1s & 309MB & 0.1s & 727MB\\
LIDER Stage 3 - Building all IRs & 83s & 14.2GB & 224s & 26.4GB\\
\midrule
SK-LSH & 472s & 30.2GB & 1181s & 62.7GB\\
\bottomrule
\end{tabularx}
\vspace{1mm}
\caption{Elapsed time of each construction stage in LIDER and memory usage by LIDER when each stage finishes, with comparison to the original SK-LSH on the two largest datasets. The \textit{CR} in LIDER Stage 2 means \textit{Centroids Retriever} while the \textit{IR} in Stage 3 stands for \textit{In-cluster Retriever}.}
\label{tab:lider-memory}
\vspace{-8mm}
\end{table}

\section{Conclusion}
In this paper, we introduce LIDER, an efficient high-dimensional learned index for large-scale dense passage retrieval. 
Experiments present that LIDER outperforms the state-of-the-art ANN indexes on large-scale dense retrieval tasks by achieving higher efficiency with high retrieval quality. It also has a much better capability of effectiveness-efficiency trade-off. Therefore, LIDER can serve as a practical and powerful component in very large-scale real-world dense retrieval applications. 

\begin{acks}
This work is partially supported by DARPA under Award \#FA8750-18-2-0014 (AIDA/GAIA). 
\end{acks}

\bibliographystyle{ACM-Reference-Format}
\bibliography{main}

\end{document}